\newcommand{\bm}[1]{{\mathbf{#1}}}
\newcommand{\Es}{{\mathbb{E}}}          
\newcommand{\diag}{{\text{diag}}}
\newcommand{\I}{\bm{I}}
\newcommand{\Zero}{\bm{O}}
\newcommand{\ba}{\bm a}
\newcommand{\bg}{\bm g}
\newcommand{\br}{\bm r}
\newcommand{\bh}{\bm h}
\newcommand{\fb}{\bm f}
\newcommand{\pb}{\bm p}
\newcommand{\bs}{\bm s}
\newcommand{\zb}{\bm z}
\newcommand{\Cb}{\bm C}
\newcommand{\gb}{\bm g}
\newcommand{\Gammab}{\bm \Gamma}
\newcommand{\Thetab}{\bm \Theta}
\newcommand{\betab}{\boldsymbol \beta}
\newcommand{\gammab}{\boldsymbol \gamma}
\newcommand{\thetab}{\boldsymbol \theta}
\newcommand{\Fb}{\bm F}
\newcommand{\Rb}{\bm R}
\newcommand{\vb}{\bm v}
\newcommand{\xb}{\bm x}
\newcommand{\Bb}{\bm B}
\newcommand{\Cset}{\mathbb{C}}
\newcommand{\Rset}{\mathbb{R}}
\newcommand{\eqdef}{\triangleq}
\newcommand{\herm}{\text{H}}
\newcommand{\trasp}{\text{T}}
\newcommand{\pot}{\EuScript{P}}
\newcommand{\capa}{\EuScript{R}}
\newcommand{\capaa}{\EuScript{C}}
\newcommand{\fair}{\EuScript{F}}
\def\bdm#1\edm{\begin{displaymath}#1\end{displaymath}}
\def\be#1\ee{\begin{equation}#1\end{equation}}
\def\barr#1\earr{\begin{align}#1\end{align}}
\newcommand{\IeeeTIT}{{\em IEEE Trans.\ Inf. Theory\/}}
\newcommand{\IeeeTSP}{{\em IEEE Trans.\ Signal Process.\/}}
\newcommand{\IeeeTCOMM}{{\em IEEE Trans.\ Commun.\/}}
\newcommand{\IeeeWCOMMLETT}{{\em IEEE Wireless Commun.\ Lett.\/}}
\newcommand{\IeeeTWC}{{\em IEEE Trans.\ Wireless Commun.\/}}
\newcommand{\IeeeJSAC}{{\em IEEE J.\ Select.\ Areas Commun.\/}}
\newcommand{\IeeeTVT}{{\em IEEE Trans.\ Veh. Technol.\/}}
\newcommand{\IeeeTAP}{{\em IEEE Trans.\ Antennas Propag.\/}}
\newcommand{\IeeeCOMMMAG}{{\em IEEE Commun.\ Magazine\/}}
\newcommand{\IeeeACCESS}{{\em IEEE Access}}
\newcommand{\IeeeOJCOMMSOC}{{\em IEEE Open J.\ Commun.\ Soc.\/}}
\newtheorem{theorem}{Theorem}[section]
\newtheorem{lemma}[theorem]{Lemma}
\begin{document}

\title{Rapidly time-varying reconfigurable intelligent surfaces for 
downlink multiuser transmissions
}
\author{Francesco~Verde,~\IEEEmembership{Senior Member,~IEEE},
Donatella~Darsena,~\IEEEmembership{Senior Member,~IEEE},
and Vincenzo~Galdi,~\IEEEmembership{Fellow,~IEEE}

\thanks{
Manuscript received April 23, 2023; 
revised September 28, 2023 and 
December 1, 2023;
accepted January 21, 2024.
The associate editor coordinating the review of this paper and
approving  it for publication  was Dr.~Wenyi Zhang.
(\em Corresponding author: Francesco Verde)
}
\thanks{
F.~Verde and D.~Darsena are with the Department of Electrical Engineering and Information Technology,  University Federico II, Naples I-80125,
Italy, and also with the National Inter-University Consortium 
for Telecommunications (CNIT), Pisa I-56124, Italy 
[e-mail: (f.verde, darsena)@unina.it].
}
\thanks{
V.~Galdi is with the Department of Engineering, University of Sannio,  
Benevento I-82100, Italy (e-mail: vgaldi@unisannio.it).
}
\thanks{
This work was partially supported by the European Union under the Italian National Recovery and Resilience Plan (NRRP) of NextGenerationEU, partnership on ``Telecommunications of the Future" (PE00000001 - program ``RESTART").}
}
\markboth{IEEE Transactions on Communications, Vol.~xx,
No.~yy,~zz~2024}{Verde \MakeLowercase{\textit{et al.}}:
Rapidly time-varying reconfigurable intelligent surfaces for 
downlink multiuser transmissions}

\IEEEpubid{0000--0000/00\$00.00~\copyright~2024 IEEE}

\maketitle

\begin{abstract}
Amplitude, phase, frequency, and polarization
states of electromagnetic (EM) waves can be dynami\-cally manipulated 
by means of artificially engineered planar materials, 
composed of sub-wavelength meta-atoms, which are typically
referred to as metasurfaces.  
In wireless communications, metasurfaces 
can be configured to judiciously modify the EM propagation
environment in such a way to achieve different network objectives
in a digitally programmable way.  
In such a context, metasurfaces are typically referred to as 
reconfigurable intelligent surfaces (RISs).
Until now, researchers in wireless communications have mainly focused their attention 
on slowly time-varying designs of RISs, where 
the spatial-phase gradient across the RIS is varied at the rate 
equal to the inverse of the channel coherence time.
Additional degrees of freedom for controlling EM waves 
can be gained by applying a time modulation to the
reflection response of RISs during the channel
coherence time interval, thereby attaining
rapidly time-varying RISs. 
In this paper, we develop a general framework where 
a downlink multiuser transmission over  single-input 
single-output slow fading channels is assisted by a digitally controlled
rapidly time-varying RIS.
We show that reconfiguring the RIS at a rate greater than
the inverse of the channel coherence time  might be beneficial from a 
communication perspective depending 
on the considered network utility function and the available 
channel state information at the transmitter (CSIT). 
The conclusions of our analysis in terms of system design guidelines are as follows:
(i) if the network utility function is the  sum-rate 
time-averaged network capacity, without any constraint on fair resource allocation, and full CSIT is available, it is unnecessary to change the electronic properties of the RIS
within the channel coherence time interval;  
(ii) if partial CSIT is assumed only,  
a rapidly time-varying randomized RIS
allows to achieve 
a suitable balance between 
sum-rate time-averaged
capacity and user fairness, especially for a sufficiently large number of users;
(iii) regardless of the available amount of CSIT, the design of  
rapid temporal variations across the RIS is instrumental for developing 
scheduling algorithms aimed at maximizing the network capacity 
subject to fairness constraints.

\end{abstract}

\begin{IEEEkeywords}
Capacity, downlink transmission, fair scheduling, multiuser diversity, 
partial channel state information, reconfigurable intelligent surface (RIS), 
rapidly time-varying RIS, space-time metasurfaces.
\end{IEEEkeywords}

\section{Introduction}

\IEEEPARstart{S}{ince} the birth of radio communications in the early $1900$s,
engineers, mathematicians, and physicists have dealt with
the uncontrollable nature of propagation environments. 
{\em Reconfigurable intelligent surfaces (RISs)} are
emerging as an innovative technology for the electromagnetic
(EM) wave manipulation, signal modulation, and
smart radio environment reconfiguration in 
wireless communications systems \cite{Liaskos.2018, Basar.2019,
DiRenzo.2019, DiRenzo.2020, Tang.2020, Tang.2021}.
An RIS is a metasurface composed of 
uniform or non-uniform sub-wavelength metal/dielectric structures, referred to as 
{\em meta-atoms} (or {\em unit cells}),  
which interact with the incident EM fields, 
whose working frequency can vary from sub-$6$ GHz to THz.
Such meta-atoms can be independently controlled via software 
to switch among different reflection amplitude and phase 
responses that collectively shape the
wavefront of the impinging wave \cite{Cui.2014,Huang.2017},
without the need of bulky phase shifters.
Thanks to their ultra-thin profile, square-law array
gain, low-power consumption, low noise, and easy fabrication and deployment, RISs 
yield more freedom and flexibility to create different reflecting
patterns for incident waves from diverse directions \cite{Li.2016, Li.2017, Cai.2019, Liu.2017, Sievenpiper.2003}, compared to conventional phased antenna arrays.
RISs can be realized by embedding  into 
meta-atoms  tunable varactor diodes, switchable
positive-intrinsic-negative (PIN) diodes, 
field effect transistors, or
micro-electromechanical system switches, 
which collectively
form an inhomogeneous surface with specific phase
profiles \cite{Lau.2012,Boccia.2012,Shad.2012,Xu.2013,
Chen.2017}.

\IEEEpubidadjcol

In wireless communications,  
the channel response may change with time. 
The channel coherence time is a measure of the expected
time duration over which the response of the channel is essentially invariant
and it is related to the bandwidth of the Doppler spectrum.
Due to the time-varying nature of wireless channels, 
the reflection coefficient of the meta-atoms 
has to be varied in both space and time domains,
thus leading to the so-called {\em space-time metasurfaces}
in the EM literature \cite{Zhang.2020,Zhao.2019,Zhang.2018,Tennant.2009,Kummer.1963,Zhang.2019,Dai.2018,Dai.2020a,Dai.2020b,Raj.2021,Castaldi_2021,Zhang_2021,Tara_2022, 
Zhao_2019,Wang_2021,Hada_2015,Taravati_2020}.
In {\em slowly time-varying} design techniques, 
the rate at which the reflecting coefficients 
of the meta-atoms are updated is equal to the inverse
of the channel coherence time.
For each channel coherence time interval,  
anomalous reflection/refraction (for beam steering) 
and reflective/transmissive focusing (for high directivity)
are obtained by spatially varying the 
state of meta-atoms at different 
positions on the RIS.  
In this paper, we instead consider 
{\em rapidly time-varying} design techniques: 
in this case, besides constructing the
spatially varying phase profile, 
the phase responses of the meta-atoms are modulated  
in time  during the channel coherence time interval, 
by assuming that the time-modulation speed is much smaller than the
frequency of the incident EM signal.
Rapidly time-varying RISs offer additional degrees of freedom for 
manipulating signals, such as, e.g., nonlinear harmonic manipulations \cite{Zhao_2019}
and spread-spectrum camouflaging \cite{Wang_2021}.

Relying on the feasibility of engineering 
the meta-atoms on the metasurface, a plethora of papers has been published 
where the wireless channel is programmed 
by optimizing on-the-fly the response  of built-in RISs to 
achieve different system objectives, such as 
capacity gains \cite{Hu.2018, Jung.2020, Li.2020,Di.2020,Huang.2020,Najafi.2021,Abrardo.2021},
coverage extension \cite{Wang.2020,Pan.2020b}, 
energy efficiency \cite{Huang.2019,Yan.2020},
maximization of minimum signal-to-noise ratio (SNR) \cite{Nadeem.2020},
joint maximization of rate and energy efficiency \cite{Zappone.2021},
quality of services (QoS) targets under imperfect channel state information (CSI) \cite{Zhou.2020},
super resolution CSI estimation \cite{He.2021}, and 
high level of localization accuracy \cite{Dardari.2022}.
When the control of the RIS is 
performed by the transmitter, the RIS is also equipped 
with a smart controller that communicates with the
transmitter via a separate wireless link for coordinating transmission
and exchanging information on the phase shifts of all 
reflecting elements in real time. 
In this scenario, {\em CSI at the
transmitter (CSIT)} is required to derive the optimal 
response of the RIS.

To the best of our knowledge, {\em almost all the studies
within the wireless communications community consider only  
slowly time-varying RISs}.\footnote{The list 
of the aforementioned references is by no means exhaustive. We refer to the 
numerous surveys available in the literature for a clearer
picture of the state-of-the-art regarding slowly time-varying designs 
of RISs in wireless communications systems.
}
Notable exceptions are represented by \cite{Yur_2020} and  \cite{Mizmizi_2023}.
Along the same lines of \cite{Zhang.2018},  only
point-to-point communication (i.e., the scenario of a single transmitter and a single receiver) has
been considered in \cite{Yur_2020}.
In \cite{Mizmizi_2023}, the modulation in time of the reflection
response of the RIS is used to transmit information over a full-duplex 
point-to-point  link. 
Nevertheless, the use of rapidly time-varying 
RISs specifically subordinated to the interest 
of a network of many users (i.e., RISs do not transmit their own information)
remains an unexplored issue and their benefits in  
multiuser communications have not been studied yet.

\subsection{Contributions}
\label{sec:contributions}

We focus on a single-input single-output downlink of a wireless communication 
multiuser system operating over a slow fading channel with coherence time $T_\text{c}$,
which is assumed to be much longer than the symbol period of the transmit digital modulator. 
We assume a delay-tolerant scenario, where the user rates can be adapted
according to their instantaneous channel conditions. 
The scheduler function in the 
medium-access-control sublayer of the transmitter consists of 
dividing each channel coherence time interval  in time slots
and introducing a scheme for resource assignment among the users. 
The downlink transmission is assisted by a digital rapidly time-varying RIS, whose
reflection response is fixed during each time slot, but it 
can be digitally varied in both time and space domains from
one time slot to another in a controlled manner.  
Besides channel conditions, the total system performance and the data rate of each user are affected by the scheme used for scheduling and the states of the reflection coefficients
of the RIS meta-atoms.
The contributions of this study are summarized as follows.

\begin{enumerate}[1)]

\item
We prove that maximization of the sum-rate time-averaged capacity,
with respect to transmit powers allocated at the users and reflection coefficients of the meta-atoms, can
be achieved by a slowly time-varying RIS in the case of full CSIT. 
In this case, the reflection coefficients can be optimized with a manageable computational complexity
by resorting to 
the {\em block-coordinate descent} (or {\em Gauss-Seidel}) method \cite{Bert.1999},
which accounts for discrete variations  of the reflection response of the RIS.

\item
The scheduling deriving by the maximization of the sum-rate time-averaged capacity
with a slowly time-varying RIS would result in a very unfair sharing of the channel resource, by letting only the user with the best channel condition to transmit
during each channel coherence time interval. Moreover, in many applications, it is not 
reasonable to assume that full CSIT can be made available. 
To partially overcome such shortcomings, 
we propose to use a rapidly time-varying randomized RIS where the reflection coefficients of the
meta-atoms are randomly varied from one time slot to another in both space and time. 
This induces 
random fluctuations in the overall channel seen by each user
even if the physical channel has no fluctuations over a time interval of duration $T_\text{c}$.
In this case, 
maximization of the sum-rate time-averaged capacity with respect 
to transmit powers allocated at the users requires only partial
CSIT and more than one user may be scheduled during each channel coherence time interval.

\item
To ensure a proportionally fair allocation of channel resource \cite{Kelly.1998},
we show that, when the aim is to  maximize the sum of logarithmic 
exponentially weighted moving average user rates, a rapidly time-varying 
RIS is the natural solution in the case of both full and partial CSIT.

\end{enumerate}
 
\subsection{Notations}
\label{sec:pre}

Upper- and lower-case bold letters denote matrices and vectors;
the superscripts
$*$, $\trasp$ and $\herm$
denote the conjugate,
the transpose, and the Hermitian (conjugate transpose) of a matrix;
$[x]^{+}$ stands for $\max\{x,0\}$;
$\mathbb{C}$, $\mathbb{R}$ and $\mathbb{Z}$ are
the fields of complex, real and integer numbers;
$\mathbb{C}^{n}$ $[\mathbb{R}^{n}]$ denotes the
vector-space of all $n$-column vectors with complex
[real] coordinates;
similarly, $\mathbb{C}^{n \times m}$ $[\mathbb{R}^{n \times m}]$
denotes the vector-space of all the $n \times m$ matrices with
complex [real] elements;
$j \eqdef \sqrt{-1}$ denotes the imaginary unit;
$\Re\{x\}$ is the real part of $x \in \Cset$;
$\ast$ denotes the (linear) convolution operator;
$\Pi(t)$ is the basic rectangular window, i.e., $\Pi(t)=1$ for $|t|\le1/2$ and zero otherwise;
$\lfloor x \rfloor$ rounds $x \in \mathbb{R}$ to the nearest integer less than or equal to that element;
$\otimes$ is the Kronecker product;
$\mathbf{0}_{n}$, $\Zero_{n \times m}$ and $\I_{n}$
denote the $n$-column zero vector, the $n \times m$
zero matrix and the $n \times n$ identity matrix;
matrix $\mathbf{A}= \diag (a_{0}, a_{1}, \ldots,
a_{n-1}) \in \mathbb{C}^{n \times n}$ is diagonal;
$\{\mathbf{A}\}_{i,\ell}$ indicates the
$(i,\ell)$th element of $\mathbf{A} \in \Cset^{n \times m}$, with 
$i \in \{1,2,\ldots, n\}$ and $\ell \in \{1,2,\ldots, m\}$;
$\|\mathbf{a}\|$ denotes the 
Frobenius norm of $\mathbf{a} \in \Cset^{n}$;
$\Es[\cdot]$ denotes ensemble averaging; 
$\xb \sim {\cal CN}(\boldsymbol{\mu}, \Cb)$ is
a circularly symmetric complex
Gaussian vector $\xb \in \Cset^n$ with mean 
$\boldsymbol{\mu} \in \Cset^n$ and
covariance $\Cb \in \Cset^{n \times n}$.

\begin{figure}[t]
\centering
\includegraphics[width=\columnwidth]{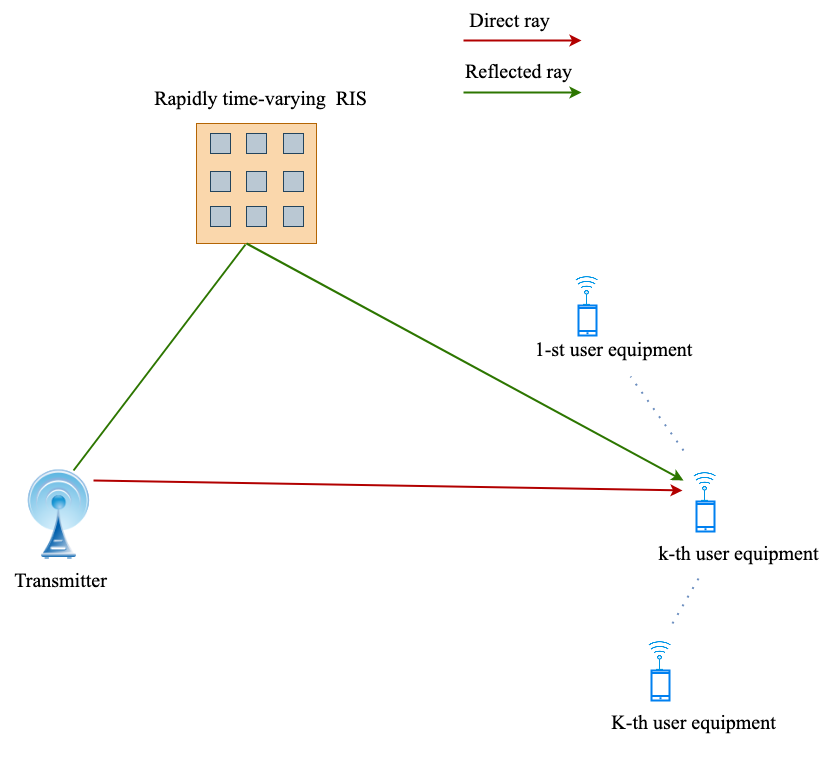}
\caption{A pictorial view of a downlink transmission in a 
multiuser system assisted by a rapidly time-varying RIS.
In the figure, the direct and reflected rays of the generic $k$-th user are depicted only,
with $k \in \{1,2,\ldots, K\}$.
}
\label{fig:fig_1}
\end{figure}

\section{Downlink with rapidly time-varying RIS}
\label{sec:system}

With reference to Fig.~\ref{fig:fig_1}, 
we consider a wireless system including a 
single-antenna transmitter, whose aim
is to transfer information in downlink to $K > 1$
single-antenna user equipments  (UEs).
The transmission is assisted by a digitally programmable 
RIS working in reflection mode, which is 
installed on the face of a surrounding  high-rise building, 
indoor walls, or roadside billboards.
The RIS is made of $Q=Q_x \times Q_y$ spatial meta-atoms,
which are positioned along a rectangular grid 
having $Q_x$ and $Q_y$ elements on the $x$ and $y$ axes, respectively. 
The meta-atoms are independently controlled to switch
among different EM responses that collectively shape the
spatial and spectral characteristics 
of the illuminating wave 
in a time-varying fashion \cite{Zhang.2020,Zhao.2019,Zhang.2018,Tennant.2009,Kummer.1963,Zhang.2019,Dai.2018,Dai.2020a,Dai.2020b,Raj.2021,Castaldi_2021,Zhang_2021}.
The functionality of such a
metasurface is determined by the instantaneous digital reflection response of
each meta-atom, which can be dynamically controlled by
digital logic devices \cite{Cui.2014,Huang.2017}.
We consider a transmission system with baud rate $1/T_\text{s}$.
Throughout the paper,  we assume
that the communication bandwidth $W \propto 1/T_\text{s}$
is much smaller than the carrier frequency $f_0$ ({\em narrowband assumption}),
i.e., $f_0 \, T_\text{s} \gg 1$.
Narrowband frequency-flat channels\footnote{Wideband 
channels can be turned into flat fading ones
by multicarrier techniques that are  
widely applied in broadband wireless standards \cite{LTE,5G}.
}
are considered with coherence time $T_\text{c}$
spanning $L_\text{c}$ symbol 
intervals, i.e., $T_\text{c}=L_\text{c} \, T_\text{s}$. 
These channels are modeled by a single-tap filter, as most of the multipaths arrive
during one symbol time,  which is approximately
constant within frame intervals of length $T_\text{c}$,
but is allowed to independently change from one
coherence period to another ({\em block fading model}).
Each channel tap coefficient is formed by the superposition of a large
number of micro-scattering components (e.g., due to rough
surfaces) having (approximately) the same delay. By virtue of the central 
limit theorem, it is customary to model
the superposition of such  many small effects as Gaussian \cite{Proakis}.

The signal received by each user consists of two components (see Fig.~\ref{fig:fig_1}): 
the {\em direct} ray, which corresponds to
the signal propagating from the transmitter to the $k$-th UE
with propagation delay $\tau_{0,k}$, 
and a {\em reflected} ray, which is the 
signal reflected off the RIS with (overall) propagation
delay $\tau_{1,k}$.
The delay spread $\Delta_k \eqdef \tau_{1,k}-\tau_{0,k}$
of the {\em composite} $k$-th channel (i.e., direct and reflected rays)
equals the difference between the delay of the direct ray and that 
of the reflected one.
We assume that the transmitted signal is also narrowband 
relative to the delay spread,\footnote{This assumption can be relaxed by 
considered a block transmission system that
introduces a time guard band 
of duration $\Delta_\text{max} \eqdef \max_{k \in \{1,2,\ldots,K\}} \Delta_k$
after every block to
eliminate the interblock interference between data blocks.} i.e., 
$\Delta_k \ll T_\text{s}$, such that
the effects of $\tau_{0,k}$ and $\tau_{1,k}$ can be ignored at the $k$-th UE.

\subsection{Signal emitted by the transmitter}
\label{sec:DPC}

Each channel coherence time interval is divided into 
$M$ time slots of duration $T_\text{r} \eqdef T_\text{c}/M$.
Each slot spans $P$ symbols intervals, i.e., $T_\text{r} 
= P \, T_\text{s}$. Consequently, it results that $L_\text{c} = M \, P$.
Let $s_k(n)$ be the information-bearing symbol intended for the
$k$-th user within the $n$-th symbol interval, for 
$k \in \{1,2,\ldots, K\}$ and $n \in \mathbb{Z}$.
The transmitted symbols $\{s_k(n)\}_{n \in \mathbb{Z}}$ are modeled as
mutually independent sequences of zero-mean unit-variance independent and identically
distributed (i.i.d.) complex circular random variables.
For each $k \in \{1,2,\ldots, K\}$, the data stream $\{s_k(n)\}_{n \in \mathbb{Z}}$ 
of the $k$-th user at rate $1/T_\text{s}$ is converted
at the transmitter side into $M$ parallel substreams
$s^{(m)}_k(p) \eqdef s_k(p \, P + m)$, for 
$p \in \{0,1,\ldots, P-1\}$ and 
$m \in \{0,1,\ldots, M-1\}$.
Such substreams are superposed as follows
\be
s^{(m)}(p) = \sum_{k=1}^K \sqrt{\pot_k^{(m)}} \, s^{(m)}_k(p)
\ee
where $\pot_k^{(m)}$ is the transmit power allocated to the 
$k$-th user in the $m$-th slot, subject to the time-averaged sum-power constraint
$\frac{1}{M} \sum_{k=1}^K \sum_{m=0}^{M-1}
\pot_k^{(m)} \le \pot_\text{TX}$,
where $\pot_\text{TX} >0$ denotes the (fixed) 
maximum allowed transmit power.
With reference to a generic coherence interval, 
the continuous-time baseband signal emitted by the transmitter reads as
\be
x(t)=\sum_{m=0}^{M-1} \sum_{p=0}^{P-1}  s^{(m)}(p) 
\, \psi\left(t-p \, T_\text{s}- m \, T_\text{r}\right)
\label{eq:xk}
\ee
where $\psi(t)$ is the unit-energy square-root Nyquist pulse-shaping
filter \cite{Proakis}. In this case, the communication bandwidth 
is $W=(1+\varrho_\text{s})/T_\text{s}$,
with $0 \le \varrho_\text{s} \le 1$ being the rolloff factor.

\subsection{Rapidly time-varying RIS}
\label{sec:signal-R}

The RISs widely studied in existing works 
\cite{Cui.2014,Huang.2017,Li.2016, Li.2017, Cai.2019, Liu.2017, Sievenpiper.2003, 
Lau.2012,Boccia.2012,Shad.2012,Xu.2013, Chen.2017,
Zhang.2020,Zhao.2019,Zhang.2018,Tennant.2009,Kummer.1963,Zhang.2019,Dai.2018,Dai.2020a,Dai.2020b,Raj.2021,Castaldi_2021,Zhang_2021}
consume minimal power, which can be 
as low as a few $\mu$W per meta-atom,
and the introduced thermal noise is also negligible. 
By assuming that 
the reflection-response of the RIS is 
frequency flat, i.e., it is constant
over the whole signal bandwidth $W$,
the far-field baseband signal reflected by a rapidly time-varying RIS reads as
\be
\zb(t) =\Rb(t) \, \gb \, x(t)
\label{eq:z-vet}
\ee
with $\Rb(t) \eqdef \diag\left[\gamma_{1}(t), 
\gamma_{2}(t), \ldots, \gamma_{Q}(t)\right]$, where
according to the time-switched array theory \cite{Kummer.1963, Tennant.2009},
the complex parameter
$\gamma_{q}(t)$ is the {\em time-modulated reflection
coefficient} of the $q$-th meta-atom,
$\forall q \in \{1,2, \ldots, Q\}$, whereas $\gb \in \Cset^{Q}$ 
denotes  the low-pass equivalent channel response
between the transmitter and the RIS
within the considered channel coherence time.
RISs have the capability of controlling the amplitude,
phase, and polarization of incident EM waves \cite{Ding.2020}.
Design methods of programmable RISs
have been proposed in   
\cite{Liao.2021,Hong.2021,Li.2022} where the reflection magnitude
and phase can be independently controlled. 
We focus on an RIS with 
phase-only modulation, by deferring the 
case of RISs with amplitude-phase modulated elements 
to future work.

If $\Rb(t)$ does not vary 
during each channel coherence time interval,  
the effect of the RIS consists of
multiplying the impinging signal $\gb \, x(t)$
by a constant matrix $\Rb(t) \equiv \Rb$, i.e., slowly time-varying RIS, 
which is the configuration considered in previous works 
\cite{Liaskos.2018, Basar.2019,
DiRenzo.2019, DiRenzo.2020, Tang.2020, Tang.2021,
Hu.2018, Jung.2020, Li.2020,Di.2020,Huang.2020,Najafi.2021,Abrardo.2021,
Wang.2020,Pan.2020b,Huang.2019,Yan.2020, Nadeem.2020,
Zappone.2021,Zhou.2020,He.2021,Dardari.2022}: in this case, each UE sees 
a time-invariant overall channel (see also Subsection~\ref{sec:TI}).
In  this  paper,  relying on the results of recent experimental
studies \cite{Zhang.2020,Zhao.2019,Zhang.2018,Tennant.2009,Kummer.1963,Zhang.2019,Dai.2018,Dai.2020a,Dai.2020b,Raj.2021,Castaldi_2021,Zhang_2021,Hada_2015,Taravati_2020}, 
we  consider  the  case  where $\Rb(t)$ is
time-varying over the channel 
coherence time interval.
In its simplest form, the reflection coefficient of the generic $q$-th meta-atom 
is subject to a linear modulation as
\be
\gamma_{q}(t)  = 
\sum_{m=0}^{M-1} \gamma_q^{(m)} \, \phi(t- m \, T_\text{r})
\label{eq:gamma}
\ee
where 
$\phi(t)=\Pi\left(\frac{t-T_\text{r}/2}{T_\text{r}}\right)$
is a pulse-shaping filter of duration $T_\text{r}$,\footnote{The function 
$\phi(t)$ can be of a some other appropriate shape 
\cite{Zhang.2019,Castaldi_2021,Liao.2021,Hong.2021,Li.2022}
and it represents another degree of freedom in designing the RIS
that we aim at exploring  in a future work.
} 
the complex coefficients 
$\gamma_q^{(0)}, \gamma_q^{(1)}, \ldots, \gamma_q^{(M-1)}$
are at the designer's disposal. For an $b$-bit
digital meta-atom with phase-only modulation capability, 
each coefficient $\gamma_q^{(m)}$ assumes values in the set  
$\mathcal{R} \eqdef \{R_0, R_1, \ldots, R_{2^b-1}\}$
of cardinality $2^b$, 
with elements
$R_\ell \eqdef \exp(j \frac{2 \pi}{2^b} \ell)$, for $\ell \in \{0,1, \ldots, 2^b-1\}$.
By considering the balance between the system cost,
design complexity, and overall performance, it has been found
that $b$-bit meta-atoms are suitable for RIS implementation \cite{Wu.2008},
for small-to-moderate values of $b$.

\begin{table*}
\scriptsize
\centering
\caption{Summary of the main system parameters 
and the time-scale of change of the key physical quantities.}
\label{tab:example-1}
\begin{tabular}{ccc}
\hline
\noalign{\vskip\doublerulesep}
\textbf{Key parameters and time-scales} & \textbf{Symbols} & 
\textbf{Values for an exemplifying scenario} 
\tabularnewline[\doublerulesep]
\hline
Carrier frequency & $f_0$ & $1.5$ GHz 
\tabularnewline
Communication  bandwidth & $W$ & $5$ MHz
\tabularnewline
Rolloff factor  &  $\varrho_\text{s}$ & $0.25$
\tabularnewline
Symbol period & $T_\text{s}=(1+\varrho_\text{s})/W$ & $250$ ns
\tabularnewline
Channel coherence time  &  $T_\text{c}$ & $50$ ms
\tabularnewline
Number of symbols per channel coherence interval & 
$L_\text{c}=\lfloor T_\text{c}/T_\text{s} \rfloor$  & $200,000$ 
\tabularnewline
Switching frequencies of meta-atoms & $f_\text{r}=1/T_\text{r}$ & $50$ kHz
\tabularnewline
Number of slots per channel coherence interval & 
$M=\lfloor T_\text{c}/T_\text{r} \rfloor$  & $2500$ 
\tabularnewline
Number of symbols per slot & 
$P=\lfloor T_\text{r}/T_\text{s} \rfloor$  & $80$ 
\tabularnewline
\hline
\end{tabular}
\end{table*}

According to \cite{Zhang.2020,Zhao.2019,Zhang.2018,Tennant.2009,Kummer.1963,Zhang.2019,Dai.2018,Dai.2020a,Dai.2020b,Raj.2021,Castaldi_2021,Zhang_2021,Hada_2015,Taravati_2020},
eq.~\eqref{eq:z-vet} is approximately
valid when the reflection coefficients 
vary at a very slow pace compared to the carrier
frequency of the impinging signal, i.e., 
the {\em switching frequency} $f_\text{r} \eqdef 1/T_\text{r}$ is much smaller
than $f_0$ or, equivalently, $f_0 \, T_\text{r} \gg 1$. This operative condition 
is surely satisfied in the narrowband regime since $T_\text{r} > T_\text{s}$ by assumption.
The main system parameters and the time-scale of change of key quantities are summarized in the first two columns of Table~\ref{tab:example-1}, 
whereas their values are reported in the third column for 
an exemplifying scenario of practical interest.
In particular, the values of the carrier frequency $f_0$ and the 
communication  bandwidth $W$ are consistent with 
Frequency Range 1 (FR1) of 5G New Radio (NR) \cite{3GPP-FR}.
The value of the channel coherence time is typical of a pedestrian scenario.
In current implementations \cite{Zhang.2020,Zhao.2019,Zhang.2018,Tennant.2009,
Kummer.1963,Zhang.2019,Dai.2018,Dai.2020a,Dai.2020b,Raj.2021,Castaldi_2021,Zhang_2021,
Taravati_2020}, the switching frequencies of the meta-atoms can reach
tens of  MHz. Larger values of $f_\text{r}$ can be achieved 
by exploiting faster switching mechanisms based on 
innovative materials, such as graphene or 
vanadium dioxide \cite{Wang.2019}.
However, the time variation induced by the RIS 
should be slow enough and should happen at a time scale allowing 
the reflection response $\Rb(t)$ to be unaffected by the 
receiving filter at the UEs (see Subsection~\ref{sec:Fourier}). 
Further, the variation should be slow 
enough to ensure that CSI is reliably acquired (see Subsection~\ref{sec:channels}).  

\subsection{Signal received by each user}
\label{sec:Fourier}

We customarily assume that each UE uses standard timing synchronization with respect
to its direct link. To decode the data signal, each
UE performs matched filtering with respect to the symbol
pulse $\psi(t)$. 
After matched filtering and sampling with rate $1/T_\text{s}$ at time instants 
$t_{p}^{(m)} \eqdef p \, T_\text{s} + m \, T_\text{r}$,  with $p \in \{0,1, \ldots, P-1\}$
and $m \in \{0,1, \ldots, M-1\}$, the discrete-time baseband 
signal received at the $k$-th user reads as
\begin{multline}
r_{k}^{(m)}(p)  = \underbrace{h_{k}^* \, x(t) \ast \psi(-t)}_\text{Direct link} \Big|_{t=t_{p}^{(m)}} \\ + 
\underbrace{\fb_{k}^\herm \, \zb(t) \ast \psi(-t)}_\text{Reflection link} \Big|_{t=t_{p}^{(m)}} 
+ v_{k}^{(m)}(p)
\label{eq:sig}
\end{multline}
for $k \in \{1,2,\ldots, K\}$, 
where $h_{k} \in \Cset$ models the low-pass equivalent channel 
response from the transmitter to user $k$, the vector $\fb_{k}  \in \Cset^Q$ denotes  the low-pass equivalent channel 
response from the RIS to the $k$-th user, 
and $v_{k}^{(m)}(p) \sim {\cal CN}(0,1)$ is the noise sample
at the output of the matched filter, with 
$v_{k_1}^{(m_1)}(p_1)$ statistically
independent of $v_{k_2}^{(m_2)}(p_2)$, for 
$k_1 \neq k_2$, $p_1 \neq p_2$, and $m_1 \neq m_2$.
Without loss of generality, we have assumed that the noise
has unit variance. Hence, the parameter $\pot_\text{TX}$ in Subsection~\ref{sec:DPC}
takes on the meaning of total {\em transmit} signal-to-noise ratio (SNR).

By substituting \eqref{eq:xk}-\eqref{eq:z-vet} in \eqref{eq:sig}, 
and accounting for \eqref{eq:gamma}, the block 
$\br_k^{(m)} \eqdef [r_{k}^{(m)}(0), r_{k}^{(m)}(1), \ldots, r_{k}^{(m)}(P-1)]^\trasp \in \Cset^P$ received 
by the $k$-th user
within the $m$-th slot is given by\footnote{Since the raised cosine pulse 
$\psi(t) \ast \psi^*(-t)$ satisfies the Nyquist criterion, its effect
disappears in the discrete-time model \cite{Proakis}.}
\be
\br_k^{(m)}   =  
[c_{k}^{(m)}]^* \left[ \sum_{u=1}^K  \, \sqrt{\pot_u^{(m)}}\, \bs_{u}^{(m)}\right] +  \vb_{k}^{(m)}
\label{eq:rkt}
\ee
where the {\em overall} channel gain seen by the $k$-th UE during the $m$-th slot can be written as  
\be
c_{k}^{(m)} \eqdef h_{k} + \gb^\herm \, [\Gammab^{(m)}]^* 
\, \fb_{k} \in \Cset\: , \quad \text{for $k \in \{1,2,\ldots, K\}$}
\label{eq:vet-ck}
\ee 
with
\barr
\Gammab^{(m)} & \eqdef \diag\left[\gamma_1^{(m)},
\gamma_2^{(m)}, \ldots, \gamma_Q^{(m)}\right]
\\
\bs_k^{(m)} &
\eqdef [s_{k}^{(m)}(0), s_{k}^{(m)}(1), \ldots, s_{k}^{(m)}(P-1)]^\trasp \in \Cset^P
\\
\vb_k^{(m)} & \eqdef [v_{k}^{(m)}(0), v_{k}^{(m)}(1), \ldots, v_{k}^{(m)}(P-1)]^\trasp \in \Cset^P \: .
\earr
In eq.~\eqref{eq:rkt}, we have used  
$[\Rb(t) \, \gb \, x(t)] \ast \phi(-t) \approx \Rb(t) \, \gb \, [x(t) \ast \phi(-t)]$, which 
holds if $\Rb(t)$ varies at a very slow pace compared to the effective
duration of $\phi(t)$, i.e., $f_\text{r} \ll 1/T_\text{s}$ or, equivalently, 
$P \gg 1$. In many applications, the channel coherence time is of the order of 
hundreds to thousands symbols, and therefore sufficiently large values of $P$ can be chosen in 
these operative scenarios (see Table~\ref{tab:example-1} for instance).

A key observation is now in order.
The received signal \eqref{eq:rkt} shows that, even though the channels
$h_{k}$, $\gb$, and $\fb_{k}$ are time invariant 
over each coherence interval spanning $L_\text{c}$ consecutive symbol periods,
the overall channel \eqref{eq:vet-ck}
seen by each user varies in time on a slot-by-slot basis.
The rate of variation of $c_{k}^{(m)}$ in time is dictated by
$M$, as well as the diagonal matrices $\Gammab^{(0)}, \Gammab^{(1)}, \ldots, 
\Gammab^{(M-1)}$, which can be suitably 
controlled to dynamically schedule resources among the users
and, thus, they are design parameters of the system. 
Strictly speaking, the overall effect of the rapidly time-varying RIS is 
to increase the rate and dynamic range of channel fluctuations.  

\subsection{The conventional case of a slowly time-varying RIS}
\label{sec:TI}

In many recent works \cite{Liaskos.2018, Basar.2019,
DiRenzo.2019, DiRenzo.2020, Tang.2020, Tang.2021,
Hu.2018, Jung.2020, Li.2020,Di.2020,Huang.2020,Najafi.2021,Abrardo.2021,
Wang.2020,Pan.2020b,Huang.2019,Yan.2020, Nadeem.2020,
Zappone.2021,Zhou.2020,He.2021,Dardari.2022}, the RIS 
does not introduce time variability during the transmission of an 
information-bearing symbol block, i.e., it is slowly time-varying.
This can be regarded as a particular case of our general model, where
$M=1$ and $T_\text{r}=T_\text{c}$ (see Fig.~\ref{fig:fig_2}).
In this special case, one has 
$\gamma_q(t) = \gamma_q^{(0)}$, 
for  $q \in \{1,2, \ldots, Q\}$.
Therefore, if the reflection process of the RIS is time invariant over each channel coherence time interval, 
that is, 
$\Gammab^{(0)}=\Gammab^{(1)}=\cdots=
\Gammab^{(M-1)}=\diag[\gamma_1^{(0)},
\gamma_2^{(0)}, \ldots, \gamma_Q^{(0)}]$,
the overall channel vector \eqref{eq:vet-ck} from the transmitter to each user 
becomes constant over each channel coherence time interval. 

\subsection{Assumptions regarding the channels}
\label{sec:channels}

It is common to model the channels between the transmitter and 
the meta-atoms of the RIS by assuming that the
received wavefront is planar (see, e.g., \cite{Dri.1999}). 
In this case, the transmitter-to-RIS channel vector $\gb$ can be modeled as    
\be
\gb = \sigma_g  \left(\sqrt{\frac{\kappa}{\kappa+1}} +
\sqrt{\frac{1}{\kappa+1}} \, g_\text{DIF}\right) \ba_\text{RIS} 
\label{eq:Rice}
\ee
where the first summand represents the
(deterministic) {\em line-of-sight (LoS)} component,
whereas the second one    
is the (random) {\em diffuse (i.e., non-LoS)} component,
with $g_\text{DIF} \sim {\cal CN}(0, 1)$. 
The variance $\sigma_g^2$
describes the large-scale geometric pathloss
and $\kappa \ge 0$ is the Ricean factor.
It is worthwhile to note that 
$\kappa=0$ yields  Rayleigh  fading, Rician fading is obtained when $ \kappa >0$,
and,  
in the limiting case $\kappa \to + \infty$, there is only 
the LoS channel component.
The {\em spatial signature} $\ba_\text{RIS}$ of the RIS in \eqref{eq:Rice}
is given by
\begin{multline}
\ba_\text{RIS} \eqdef 
\left[1, e^{j \frac{2 \pi}{\lambda_0} d_\text{RIS} u_x}, \ldots, e^{j \frac{2 \pi}{\lambda_0} (Q_x-1) d_\text{RIS} u_x}\right]^\trasp \\
\otimes  \left[1, e^{j  \frac{2 \pi}{\lambda_0} d_\text{RIS} u_y}, \ldots, e^{j  \frac{2 \pi}{\lambda_0} (Q_y-1) d_\text{RIS} u_y}\right]^\trasp \in \mathbb{C}^{Q}
\end{multline}
where   $\lambda_0=c/f_0$ is the wavelength, 
with $c$ being the speed of light in the medium, 
$d_\text{RIS}$ is the inter-element spacing
at the RIS, $\theta_\text{RIS} \in [0, 2 \pi)$ and 
$\phi_\text{RIS} \in [-\pi/2, \pi/2)$
identify the azimuth and elevation angles, respectively,
whose directional cosines are 
$u_x \eqdef \sin \theta_\text{RIS} \, \cos \phi_\text{RIS}$
and $u_y \eqdef \sin \theta_\text{RIS} \, \sin \phi_\text{RIS}$.
It should be noted that $\Es[\|\bg\|^2]=\sigma_g^2 \, Q$.

The random channels $h_k$ and $\fb_k$ are modeled as 
$h_k \sim {\cal CN}(0, \sigma_{h_k}^2)$ and 
$\fb_k \sim {\cal CN}(\bm{0}_Q, \sigma_{f_k}^2\, \I_{Q})$,
with $g_\text{DIF}$, $h_k$, and $\fb_k$
mutually independent,
for each $k \in \{1,2,\ldots, K\}$.
The parameters 
$\sigma_{h_k}^2$ and $\sigma_{f_k}^2$
are the large-scale geometric path losses of 
the link seen by the $k$-th user 
with respect to the transmitter and the RIS, respectively.
The triplet $(g_\text{DIF}, h_k, \fb_k)$ is independent of the noise vectors $\vb_k^{(m)}$, 
$\forall k \in \{1,2,\ldots, K\}$ and $\forall m \in \{0,1,\ldots, M-1\}$. 

\subsection{High-level overview of the CSI acquisition process}
\label{sec:channels}

In the sequel, we assume that, for each $k \in \{1,2,\ldots, K\}$,  
the $k$-th UE has perfect knowledge of its channel gain $c_k^{(m)}$,
$\forall m \in \{0,1,\ldots, M-1\}$, referred to as {\em CSI at the receiver (CSIR)}, 
which can be estimated through 
standard training algorithms, by using 
$N_\text{down-full}$
downlink pilot symbols 
repeated at the beginning of each slot.
In this case, if the power spent by the transmitter
for pilot symbol transmission is significantly greater than the noise variance, 
a single downlink pilot per slot (i.e., $N_\text{down-full}=1$) is sufficient to
ensure that the channel estimation error is quite small \cite{Gursoy.2009}.
On the other hand, we consider two different amount of CSIT:

\begin{enumerate}

\itemsep=0mm

\item
In the case of {\em full} CSIT, the transmitter has 
perfect knowledge of all the relevant channel parameters 
$h_k$ and $\fb_k$, $\forall k \in \{1,2,\ldots, K\}$, and $\gb$. 

\item
In the case of {\em partial} CSIT, 
$\forall m \in \{0, 1, \ldots, M-1\}$, 
the transmitter knows only 
the user index for which 
$|c_k^{(m)}|$ is maximized.
\end{enumerate}

\begin{figure*}[t]
\centering
\includegraphics[width=\linewidth]{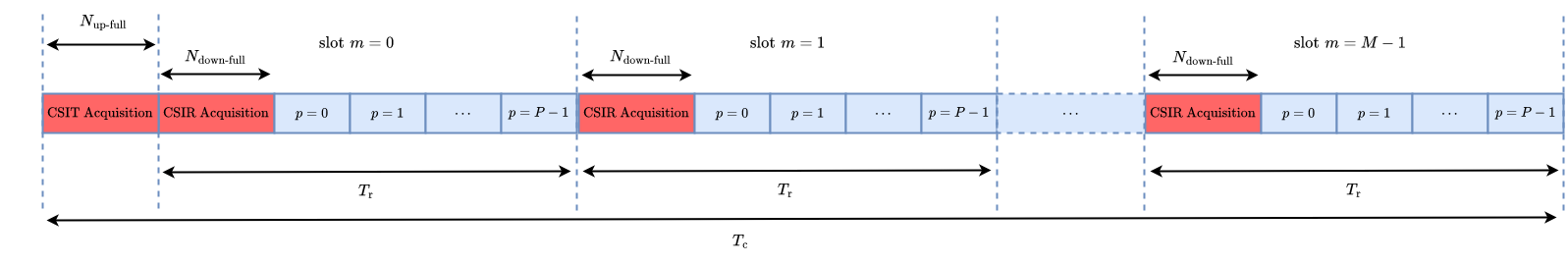}
\caption{According to a TDD protocol, acquisition of full CSIT requires an uplink training phase 
at the beginning of each channel coherence time interval, in addition to 
downlink training phases for CSIR acquisition.
}
\label{fig:fig_2}
\end{figure*}

For acquisition of full CSIT, a viable strategy consists of resorting to 
a {\em time-division duplexing (TDD)} protocol. In this case, 
where channel reciprocity can be exploited, 
acquisition of full CSIT additionally requires a non-trivial 
uplink training session of $N_\text{up-full}$ pilot symbols 
at the beginning of each channel
coherence time  (see Fig.~\ref{fig:fig_2}), 
which might be unrealistic in many
scenarios of practical interest, especially for large values of $K$ and $Q$.
Indeed, the CSIT estimation process requires  
$N_\text{up-full} \ge K (Q+1)$ pilot symbols \cite{Swin.2022}.
\begin{figure*}[t]
\centering
\includegraphics[width=\linewidth]{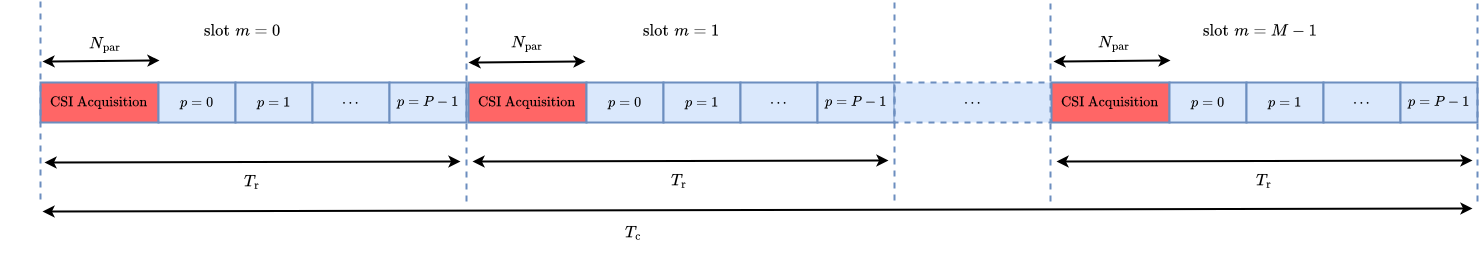}
\caption{Acquisition of partial CSIT can be performed in a centralized or decentralized manner 
by providing at the beginning of each slot 
an initial phase devoted to both CSIR acquisition and discovery of the 
UE with the strongest path for downlink information transmission.
}
\label{fig:fig_3}
\end{figure*}
On the other hand, partial CSIT involves only the discovery of the UE
having the ``best" end-to-end path, which can be identified in 
a centralized or decentralized fashion. 
In the centralized case,  each UE estimates its own 
channel coefficient $c_k^{(m)}$, through, say, a 
common downlink pilot and, then, feeds back the (quantized version of) 
instantaneous channel quality $|c_k^{(m)}|$ to the 
transmitter.\footnote{As a fundamental building component to 
enable the 5G NR system, a physical uplink control channel (PUCCH) is
mainly utilized to convey uplink control information, 
including CSI report for link adaptation and downlink data
scheduling \cite{3GPP-PUCCH}.} 
In such a centralized scenario, 
the duration $N_\text{par}$ (in symbol intervals)  
of the CSI acquisition phase in Fig.~\ref{fig:fig_3}
also depends on the delay in feeding $|c_k^{(m)}|$ back.
If feedback control channels from each UE to the transmitter
are not available, a decentralized CSI acquisition
scheme can be implemented \cite{Bletsas.2006},
according to which, after receiving 
a downlink training packet, each UE will start
its own timer with an initial value that is inversely proportional to
a local estimate of channel quality $|c_k^{(m)}|$. 
The timer of the UE with the strongest end-to-end channel will expire first
and such an UE will transmit an uplink flag packet to the transmitter, 
communicating  its identification number. 
In this decentralized scenario, the amount of 
overhead $N_\text{par}$ involved in selecting the best
UE also depends on the statistics of the wireless channel 
(see \cite{Bletsas.2006} for a probabilistic analysis).

Since the capacity results reported in this paper are valid only when 
the required CSI is perfectly known at the transmitter and at the UEs,
we do not account for channel estimation imperfections hereinafter, by
deferring CSI estimation issues to a future work.  

\section{Sum-rate time-averaged capacity: full CSIT}
\label{sec:sum-rate}

In this section, we assume {\em full} CSIT (see Subsection~\ref{sec:channels}).
The achievable data rate transmitted to the 
$k$-th UE is affected by 
the interference from the signals 
intended for the other users $u \neq k \in \{1,2,\ldots, K\}$
on the same slot.
Assuming that the transmitter encodes the information for 
each user using an i.i.d. Gaussian code,  the {\em time-averaged} rate of user $k$ can be written as 
\be
\overline{\capa}_k = \frac{1}{M} \sum_{m=0}^{M-1} 
\log_2 \left[ 1+ \text{SINR}_k^{(m)} \right]
\quad \text{(in bits/s/Hz)}
\label{eq:Rk}
\ee
for $k \in \{1,2,\ldots, K\}$, with the
signal-to-interference-plus-noise ratio (SINR) at UE $k$ given by
\be
\text{SINR}_k^{(m)} \eqdef \frac{\pot_k^{(m)} \, |c_k^{(m)}|^2}{\displaystyle 
|c_k^{(m)}|^2 \sum_{ \shortstack{\footnotesize $u=1$ \\ \footnotesize $u \neq k$}}^K 
\pot_u^{(m)} +1} \:.
\label{eq:SINR}
\ee
It should be observed that $\capa_k^{(m)} \eqdef \log_2 \left[ 1+ \text{SINR}_k^{(m)} \right]$
in  \eqref{eq:Rk} represents the achievable rate of user $k$ over 
slot $m$ (depending on the current channel conditions), 
which is referred to as the {\em instantaneous per-user rate}.
Accordingly, $\overline{\capa}_k$ is the {\em time-averaged per-user rate} over 
a given channel coherence time. 

The  {\em (net) sum-rate time-averaged capacity} can be expressed as follows
\be
\capaa_\text{sum} =  \max_{
\shortstack{\footnotesize $\pb^{(0)}, \pb^{(1)}, \ldots, \pb^{(M-1)}$ 
\\ \footnotesize $\gammab^{(0)}, \gammab^{(1)}, \ldots,\gammab^{(M-1)}$}}  \,
\xi_\text{full} \sum_{k=1}^K \overline{\capa}_k
\label{eq:max-prob}
\ee
subject to (s.t.)
\barr
& \text{C}_1: 
\frac{1}{M} \sum_{k=1}^K \sum_{m=0}^{M-1}
\pot_k^{(m)} \le \pot_\text{TX}
\nonumber \\ 
& \text{C}_2:
\gamma_q^{(m)} \in \mathcal{R}, \, \text{for $m \in \{0,1,\ldots, M-1\}$} 
\nonumber \\ &
\qquad \qquad \qquad \qquad \text{and  $q \in \{1,2,\ldots, Q\}$} 
\nonumber 
\earr
where $\pb^{(m)} \eqdef [\pot_1^{(m)},\pot_2^{(m)}, \ldots, \pot_K^{(m)}]^\trasp \in \Rset^K$,
for $m \in \{0,1,\ldots,M-1\}$, 
and we recall  again that the channel gain
$c_k^{(m)}$ explicitly depends on the RIS reflection vectors
$\gammab^{(m)} \eqdef [\gamma_1^{(m)}, \gamma_2^{(m)}, \ldots, \gamma_Q^{(m)}]^\herm \in \Cset^{Q}$
of the rapidly time-varying RIS through \eqref{eq:vet-ck}, 
whereas 
\be
\xi_\text{full} \eqdef 
1-\frac{N_\text{up-full}+M \, N_\text{down-full}}{L_\text{c}}
\label{eq:over-full}
\ee
accounts for the pilot overhead required to acquire both CSIT and CSIR  (see Fig.~\ref{fig:fig_2}).
We perform the optimization \eqref{eq:max-prob} in the following two steps, by resorting to
an approach that is called {\em concentration} in the literature 
of estimation theory \cite{Kay.1998}. 

\subsection{First step: optimization with respect to transmit powers}
\label{sec:step-1}

In this subsection, we first find the solution of 
\eqref{eq:max-prob} with respect to 
$\pb^{(0)}, \pb^{(1)}, \ldots, \pb^{(M-1)}$, 
for fixed reflection coefficients
$\Gammab \eqdef [\gammab^{(0)}, \gammab^{(1)}, \ldots,\gammab^{(M-1)}] \in \Cset^{Q \times M}$.

\begin{theorem}
\label{thm:1}
Given the reflection response $\Gammab$ of the rapidly time-varying RIS, 
the slot assignment policy maximizing the sum rate $\xi_\text{full} \sum_{k=1}^K \overline{\capa}_k$ under constraint 
$\text{C}_1$
consists 
of assigning each slot to only one UE having the best channel gain 
for that slot, i.e., for $m \in \{0,1,\ldots, M-1\}$, 
\be
\pot_k^{(m)} =
\begin{cases}
\pot^{(m)} \:, & \text{if $k=k_\text{max}^{(m)}$}
\\
0 \:, & \text{otherwise}
\end{cases}
\label{eq:power-allo}
\ee
s.t. constraint 
\be
\frac{1}{M} \sum_{m=0}^{M-1} \pot^{(m)} \le \pot_\text{TX}
\label{eq:constr-pot-2}
\ee
with  
\be
k_\text{max}^{(m)} \eqdef \arg \max_{k \in \{1,2,\ldots, K\}} |c_k^{(m)}|^2 \: .
\ee
\end{theorem}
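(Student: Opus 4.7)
My plan is to reduce \eqref{eq:max-prob} to a per-slot problem and then exploit convexity of the per-slot sum rate in the slot power-allocation vector. The key algebraic observation is that, introducing the slot-$m$ total power $S^{(m)}\eqdef \sum_{u=1}^{K}\pot_u^{(m)}$, one can recast \eqref{eq:SINR} as
\[
1+\text{SINR}_k^{(m)}=\frac{1+|c_k^{(m)}|^{2}\,S^{(m)}}{1+|c_k^{(m)}|^{2}\bigl(S^{(m)}-\pot_k^{(m)}\bigr)} \, ,
\]
so that, once $S^{(m)}$ is frozen, the per-slot sum rate $\sum_{k=1}^{K}\log_{2}\bigl(1+\text{SINR}_k^{(m)}\bigr)$ decomposes as a constant $\sum_{k}\log_{2}\bigl(1+|c_k^{(m)}|^{2}S^{(m)}\bigr)$ minus $\sum_{k}\log_{2}\bigl(1+|c_k^{(m)}|^{2}(S^{(m)}-\pot_k^{(m)})\bigr)$. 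This manipulation removes the multiuser coupling in the interference term, because every summand in the subtracted sum depends only on the scalar $\pot_k^{(m)}$.

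The heart of the argument is then to show that, at fixed $S^{(m)}$, the per-slot sum rate is a \emph{convex} function of $\pb^{(m)}$ on the simplex
\[
\Delta_{S^{(m)}}\eqdef\bigl\{\pb^{(m)}\in[0,S^{(m)}]^{K}:\;\textstyle\sum_{k=1}^{K}\pot_k^{(m)}=S^{(m)}\bigr\} \, .
\]
Indeed, each univariate map $\pot_k^{(m)}\mapsto -\log_{2}\bigl(1+|c_k^{(m)}|^{2}(S^{(m)}-\pot_k^{(m)})\bigr)$ has strictly positive second derivative
$|c_k^{(m)}|^{4}\big/\bigl[(\ln 2)\bigl(1+|c_k^{(m)}|^{2}(S^{(m)}-\pot_k^{(m)})\bigr)^{2}\bigr]$, and the sum of convex univariate functions of disjoint variables is jointly convex. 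The maximum of a convex function over a compact convex polytope is attained at an extreme point, and the extreme points of $\Delta_{S^{(m)}}$ are precisely the single-user allocations $\pot_j^{(m)}=S^{(m)}$, $\pot_k^{(m)}=0$ for $k\ne j$. This yields the one-user-per-slot structure in \eqref{eq:power-allo}.

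To identify which user is optimal in slot $m$, I would evaluate the per-slot sum rate at the vertex indexed by $j$, obtaining $\log_{2}\bigl(1+|c_j^{(m)}|^{2}S^{(m)}\bigr)$, which is strictly increasing in $|c_j^{(m)}|^{2}$; hence the best vertex is $j=k_\text{max}^{(m)}$. Because this reasoning applies independently to each slot and the replacement does not change $S^{(m)}$, the global average-power constraint $\text{C}_1$ specializes to $\frac{1}{M}\sum_{m=0}^{M-1}\pot^{(m)}\le \pot_\text{TX}$ upon setting $\pot^{(m)}\eqdef S^{(m)}$, which is exactly \eqref{eq:constr-pot-2}. The main subtlety I anticipate is the convexity/vertex step: although the scalar second-derivative check is routine, one must argue vertex attainment on the intersection of the nonnegative orthant with the affine hyperplane $\sum_k \pot_k^{(m)}=S^{(m)}$, and observe that ties in $\{|c_k^{(m)}|^{2}\}_{k=1}^{K}$ simply produce additional equally optimal single-user vertices, consistent with the existential nature of the claim.
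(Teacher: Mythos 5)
Your proof is correct, and it takes a genuinely different route from the paper: the paper does not write out an argument at all, but delegates to two external proofs --- an information-theoretic one based on the degraded-broadcast-channel ordering of the users by $|c_k^{(m)}|^2$ \cite{Tse.1997}, and an induction on the number of users \cite{Jang.2003}. Your route is a direct convexity argument: the identity $1+\text{SINR}_k^{(m)}=\bigl(1+|c_k^{(m)}|^2 S^{(m)}\bigr)/\bigl(1+|c_k^{(m)}|^2(S^{(m)}-\pot_k^{(m)})\bigr)$ is exact, and with $S^{(m)}$ frozen it turns the per-slot sum rate into a constant plus a separable sum of univariate convex functions (affine in the degenerate case $|c_k^{(m)}|=0$, which is harmless), so the maximum over the simplex $\{\pb^{(m)}\ge \mathbf{0}_K,\ \sum_k \pot_k^{(m)}=S^{(m)}\}$ is attained at a vertex, the vertices are exactly the single-user allocations, and the vertex value $\log_2\bigl(1+|c_j^{(m)}|^2 S^{(m)}\bigr)$ is increasing in $|c_j^{(m)}|^2$, which selects $j=k_\text{max}^{(m)}$; since the replacement preserves each $S^{(m)}$, constraint $\text{C}_1$ collapses to \eqref{eq:constr-pot-2}. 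This is more elementary and self-contained than either cited argument, avoids induction, and makes the decoupling across slots transparent; your handling of ties is also the right reading of the existential claim. What the paper's information-theoretic reference buys in exchange is a stronger conclusion --- that opportunistic time-sharing attains the sum capacity of the underlying degraded broadcast channel, not merely the optimum within the superposition-coding, treat-interference-as-noise model of \eqref{eq:rkt}--\eqref{eq:SINR} --- whereas your proof establishes optimality exactly for the rate expression \eqref{eq:Rk} and the optimization \eqref{eq:max-prob} as posed, which is all that Theorem~\ref{thm:1} asserts.
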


\begin{proof}
This result can be proven by using information theoretic arguments \cite{Tse.1997}
or, equivalently, by resorting the principle of mathematical
induction \cite{Jang.2003}.
\end{proof}

Theorem~\ref{thm:1} states that, 
for fixed $\Gammab$, 
the sum-rate time-averaged capacity is
equal to the largest single-user capacity in the system.
The slot assignment policy is a simple 
{\em opportunistic time-sharing} strategy: at each slot,
transmit to the UE with the strongest channel.
This result comes from the fact that, in information theory jargon, 
the model \eqref{eq:rkt} falls into the class of {\em degraded broadcast channels} \cite{Tse-book}, for
which users can be ordered in terms of the quality of signal received in each
time slot $m$, i.e.,  on the basis of $|c_1^{(m)}|^2, |c_2^{(m)}|^2, \ldots, |c_K^{(m)}|^2$.
In this case, there is always one UE experiencing  a stronger channel than
any other UE, which  is able to decode codewords intended for
other UEs, and thus the long-term average
sum-rate is maximized by only transmitting to
such a user.

To derive the sum-rate time-averaged capacity, 
given the reflection response $\Gammab$ of the RIS,  
we have to determine the amount of transmit
power to be allocated to the slots in order to maximize
the sum rate, s.t. constraint \eqref{eq:constr-pot-2}.
When the slot assignment for UEs is
carried out according to 
Theorem~\ref{thm:1}, the multiuser downlink can be viewed
as an opportunistic time-sharing system with dynamic slot allocation, where
UEs receive data through a number of slots assigned
to their own independently. Therefore, given $\Gammab$, the sum-rate time-averaged capacity 
assumes the expression reported in \eqref{eq:max-prob-par}, 
\begin{figure*}[!t]
\normalsize
\barr
\capaa_\text{sum}(\Gammab) = & \max_{\pot^{(0)}, \pot^{(1)}, \ldots, \pot^{(M-1)}}  \,
\frac{\xi_\text{full}}{M} \sum_{m=0}^{M-1}
\log_2\left[1+ \pot^{(m)} \, \alpha_\text{max}(\gammab^{(m)}) \right]\:,
\quad \text{s.t. constraint \eqref{eq:constr-pot-2}}
\label{eq:max-prob-par}
\earr
\hrulefill
\end{figure*}
with  
\be
\alpha_\text{max}(\gammab^{(m)})\eqdef \max_{k \in \{1,2,\ldots, K\}} |c_k^{(m)}|^2 \: .
\label{eq:alphamax}
\ee
The solution of problem \eqref{eq:max-prob-par} is 
{\em water-filling} \cite{Tse-book}
over the slots with the best channel gains among multiple users, i.e, 
for $m \in \{0,1,\ldots, M-1\}$, 
\be
\pot^{(m)} \equiv \pot(\gammab^{(m)})  \eqdef \left[\frac{1}{\lambda}-\frac{1}{\alpha_\text{max}(\gammab^{(m)})}\right]^+
\label{p-wf}
\ee
where $\lambda >0$ is a threshold to be determined
from the total transmit power constraint by solving 
\be
\frac{1}{M} \sum_{m=0}^{M-1} \left[\frac{1}{\lambda}-\frac{1}{\alpha_\text{max}(\gammab^{(m)})}\right]^+ =\pot_\text{TX} \: .
\label{eq:lambda}
\ee
In conclusion, 
the sum-rate time-averaged capacity with fixed $\Gammab$ is given by 
\be
\capaa_\text{sum}(\Gammab) =  
\frac{\xi_\text{full}}{M} \sum_{m=0}^{M-1}
\log_2\left[1+ \pot(\gammab^{(m)}) \, \alpha_\text{max}(\gammab^{(m)}) \right] \:.
\label{eq:C-Gamma}
\ee

\subsection{Second step: optimization with respect to reflection coefficients 
of the rapidly time-varying RIS}
\label{sec_step-2}

At this point, we determine the values of the reflection coefficients
of the rapidly time-varying RIS in order to maximize \eqref{eq:C-Gamma}, s.t. 
constraint $\text{C}_2$.
If channels are constant within a coherence block and their realizations are perfectly known
at the transmitter, it is expected that one can just optimize the reflection
response of the RIS given the perfectly known channel realizations and keep the optimal 
reflection coefficients  constant for the whole coherence block. 
Lemma~\ref{lemma:1} provides a simple proof of this intuition. 

In light of Theorem~\ref{thm:1} and \eqref{eq:C-Gamma}, the optimization problem 
\eqref{eq:max-prob} amounts to solving 
\eqref{eq:max-prob-fin},
\begin{figure*}[!t]
\normalsize
\barr
\capaa_\text{sum} & =   
\max_{\gammab^{(0)}, \gammab^{(1)}, \ldots,\gammab^{(M-1)}}  \,
\frac{\xi_\text{full}}{M} \sum_{m=0}^{M-1}
\log_2\left[1+ \pot(\gammab^{(m)}) \, \alpha_\text{max}(\gammab^{(m)}) \right]
\nonumber \\ & =
\max_{\gammab^{(0)}, \gammab^{(1)}, \ldots,\gammab^{(M-1)}}  \,
\xi_\text{full} \log_2 \left(\left \{ \prod_{m=0}^{M-1} \left[1+ \pot(\gammab^{(m)}) \, \alpha_\text{max}(\gammab^{(m)}) \right]
\right\}^{1/M}\right) \:,
\quad \text{s.t. constraint $\text{C}_2$} 
\label{eq:max-prob-fin}
\earr
\hrulefill
\end{figure*}
where we have also used well-known properties of the logarithm function.

\begin{lemma}
\label{lemma:1}

The sum-rate time-averaged capacity can be achieved by employing a 
slowly time-varying RIS, i.e, 
$M=1$ and $T_\text{r}=T_\text{c}$, 
and, hence, problem \eqref{eq:max-prob-fin} simplifies as follows
\barr
\capaa_\text{sum} = & 
\max_{\gammab \eqdef [e^{j\, \theta_1}, e^{j\, \theta_2}, \ldots, e^{j\, \theta_Q}]^\herm}  \,
\xi_\text{full} \, \log_2 \left[1+\pot(\gammab) \, \alpha_\text{max}(\gammab)\right] 
\nonumber \\ 
& \qquad \text{s.t. $\theta_q \in \Theta$, for $q \in \{1,2,\ldots, Q\}$} 
\label{eq:max-prob-fin-TI}
\earr
with $\Theta \eqdef \left\{\frac{2 \pi}{2^b} \ell, \text{for $\ell \in \{0,1, \ldots, 2^b-1\}$}\right\}$.

\end{lemma}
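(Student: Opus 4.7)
The strategy is to upper-bound the objective of \eqref{eq:max-prob-fin} by a quantity that depends only on the largest user channel gain attainable by any admissible RIS configuration, and then exhibit a \emph{time-invariant} $\Gammab$ that meets this bound. Define
\[
\alpha^\star \eqdef \max_{\gammab \in \mathcal{R}^Q} \alpha_\text{max}(\gammab),
\]
which exists because $\mathcal{R}^Q$ is a finite set, and let $\gammab^\star$ denote any maximizer. I would show that the maximum in \eqref{eq:max-prob-fin} equals $\xi_\text{full}\,\log_2[1+\pot_\text{TX}\,\alpha^\star]$ and is attained by the slowly time-varying choice $\gammab^{(0)}=\gammab^{(1)}=\cdots=\gammab^{(M-1)}=\gammab^\star$. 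Since this attainer does not exploit the temporal degrees of freedom of the RIS, setting $M=1$ and $T_\text{r}=T_\text{c}$ is without loss of optimality, and the problem collapses to \eqref{eq:max-prob-fin-TI}; the rewriting in terms of phases $\theta_q \in \Theta$ follows at once from the identification $R_\ell = \exp(j 2\pi \ell / 2^b)$.

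The central argument is a short chain of inequalities valid for any admissible $\gammab^{(0)},\ldots,\gammab^{(M-1)}$ and the associated water-filling powers $\pot(\gammab^{(m)})$ in \eqref{p-wf}. First, monotonicity of $\log_2(1+\pot\,\alpha)$ in $\alpha$ together with $\alpha_\text{max}(\gammab^{(m)})\le \alpha^\star$ bounds each summand by $\log_2[1+\pot(\gammab^{(m)})\,\alpha^\star]$. Second, concavity of $x \mapsto \log_2(1+x\,\alpha^\star)$ and Jensen's inequality give
\[
\frac{1}{M}\sum_{m=0}^{M-1} \log_2\!\left[1+\pot(\gammab^{(m)})\,\alpha^\star\right] \le \log_2\!\left[1+\alpha^\star\,\frac{1}{M}\sum_{m=0}^{M-1}\pot(\gammab^{(m)})\right].
\]
Third, the time-averaged sum-power constraint \eqref{eq:constr-pot-2} bounds the inner average by $\pot_\text{TX}$, so the right-hand side is at most $\log_2[1+\pot_\text{TX}\,\alpha^\star]$. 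Tightness follows immediately: the choice $\gammab^{(m)}\equiv \gammab^\star$ equalizes all per-slot channel gains at $\alpha^\star$, and the water-filling recipe \eqref{p-wf}--\eqref{eq:lambda} then returns the uniform allocation $\pot^{(m)}\equiv \pot_\text{TX}$, which turns both the monotonicity bound and the Jensen bound into equalities.

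The main hurdle is, in my view, conceptual rather than technical: under full CSIT the outer maximization over $\Gammab$ and the inner water-filling over $\pb^{(m)}$ do not compete. Because the RIS can be tuned to make every slot look equally strong, the optimizer has no incentive to create temporal variability in $\{\alpha_\text{max}(\gammab^{(m)})\}_{m=0}^{M-1}$; any dip below $\alpha^\star$ is strictly wasteful, and this is exactly what the monotonicity--Jensen--budget chain quantifies. A minor care-point is the discreteness of the codebook $\mathcal{R}$, but finiteness of $\mathcal{R}^Q$ guarantees that $\alpha^\star$ is realized by some $\gammab^\star$, so the reduction from \eqref{eq:max-prob-fin} to \eqref{eq:max-prob-fin-TI} is genuine and not merely an asymptotic infimum.
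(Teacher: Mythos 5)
Your proof is correct, but it takes a genuinely different---and in fact more self-contained---route than the paper's. The paper's proof consists of a single application of the arithmetic--geometric mean inequality to the product $\prod_{m=0}^{M-1}\bigl[1+ \pot(\gammab^{(m)}) \, \alpha_\text{max}(\gammab^{(m)}) \bigr]$, asserting equality if and only if all the $\gammab^{(m)}$ coincide. That argument is terse to the point of leaving work to the reader: AM--GM only dominates the geometric mean by the arithmetic mean $\frac{1}{M}\sum_{m=0}^{M-1}\bigl[1+ \pot(\gammab^{(m)}) \, \alpha_\text{max}(\gammab^{(m)}) \bigr]$, which still depends on the whole sequence $\{\gammab^{(m)}\}$ and must itself be bounded by the single-slot optimum before one can conclude; moreover, the stated equality condition is imprecise, since AM--GM is tight whenever the $M$ numbers $1+ \pot(\gammab^{(m)}) \, \alpha_\text{max}(\gammab^{(m)})$ are equal, which does not force the reflection vectors themselves to be equal. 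Your chain---monotonicity in $\alpha$ to replace each $\alpha_\text{max}(\gammab^{(m)})$ by $\alpha^\star$, Jensen's inequality on the concave map $x \mapsto \log_2(1+x\,\alpha^\star)$, and the time-averaged power budget \eqref{eq:constr-pot-2}---closes exactly the step the paper glosses over, and your achievability argument (a constant $\gammab^\star$ makes all slots identical, so water-filling \eqref{p-wf}--\eqref{eq:lambda} returns the uniform allocation $\pot_\text{TX}$) turns every inequality into an equality, yielding the optimum value $\xi_\text{full}\,\log_2\bigl(1+\pot_\text{TX}\,\alpha^\star\bigr)$ consistently with \eqref{eq:C-sum}. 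What each approach buys: the paper's AM--GM view highlights the product structure already exposed in \eqref{eq:max-prob-fin}, whereas yours is a complete upper-bound-plus-attainment verification that also makes explicit why the discreteness of $\mathcal{R}^Q$ (hence finiteness) guarantees the maximizer $\gammab^\star$ exists.
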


\begin{proof}
By virtue of the inequality of arithmetic and geometric means \cite{Horn}, it holds that
\begin{multline}
\left \{ \prod_{m=0}^{M-1} \left[1+ \pot(\gammab^{(m)}) \, \alpha_\text{max}(\gammab^{(m)}) \right]
\right\}^{1/M} 
\\
\le \frac{1}{M}
\sum_{m=0}^{M-1} \left[1+ \pot(\gammab^{(m)}) \, \alpha_\text{max}(\gammab^{(m)}) \right]
\end{multline}
with equality if and only if $\gammab \eqdef \gammab^{(0)}=\gammab^{(1)}=\cdots = \gammab^{(M-1)}$.

\end{proof}

Lemma~\ref{lemma:1} infers that, if there is
no constraint on the time-averaged rate of each user, exploitation of the temporal dimension of
the rapidly time-varying RIS is unnecessary, in the sense that time-domain variations of the reflection 
response do {\em not} improve the sum-rate time-averaged capacity, compared to a
slowly time-varying RIS. 
However, as we will show later on,
Lemma~\ref{lemma:1} does not hold if other network requirements 
are taken into account, e.g., optimization without full CSIT
or fairness for the efficient utilization of
scarce radio resources.

The mathematical programming \eqref{eq:max-prob-fin-TI} can be decomposed into the following 
$K$ simpler problems:
\begin{multline}
(\overline{\theta}_{k,1}, \overline{\theta}_{k,2}, \ldots, \overline{\theta}_{k,Q}) 
\\ =
\arg \max_{
\shortstack{\footnotesize $\theta_q \in \Theta$ 
\\ \footnotesize $q \in \{1,2,\ldots,Q\}$}
}  \, \mathcal{F}_k(e^{j \, \theta_1}, e^{j \, \theta_2}, \ldots, e^{j \, \theta_Q})
\label{eq:max-prob-ckm}
\end{multline}
for $k \in \{1,2,\ldots, K\}$,
where, according to \eqref{eq:vet-ck}, we can write
\begin{multline}
\mathcal{F}_k(e^{j \, \theta_1}, e^{j \, \theta_2}, \ldots, e^{j \, \theta_Q})  = 
\left|h_k+ \bg^\herm \, \diag(\fb_k) \, \gammab\right|^2 
\\
= |h_{k}|^2 +  2 \, \Re\left\{
\betab_k^\herm \, \gammab\right\} + \gammab^\herm \, \Bb_k \, \gammab 
\\  = |h_{k}|^2 + 2 \, \Re\left\{\sum_{q=1}^Q \beta_{k,q} \, e^{j \, \theta_q}\right\}
\\ + \sum_{q_1=1}^Q \sum_{q_2=1}^Q \{\Bb_k\}_{q_1,q_2} \, 
e^{j \, \theta_{q_1}} \, e^{-j \, \theta_{q_2}}
\label{eq:ckmsquare}
\end{multline}
where 
$\betab_k \eqdef [\beta_{k,1}, \beta_{k,2}, \ldots, \beta_{k,Q}]^\trasp =h_k \, \diag(\fb_k^*) \, \bg$
and  $\Bb_k \eqdef \diag(\fb_k^*) \, \bg \, \bg^\herm \, \diag(\fb_k) \in \Cset^{Q \times Q}$
is a  Hermitian matrix.
Accordingly, the
sum-rate time-averaged capacity reads as
\be
\capaa_\text{sum}=  
\xi_\text{full} \, \log_2\left(1+ \pot_\text{TX} \, \alpha_\text{opt} \right) 
\label{eq:C-sum}
\ee
where
$\alpha_\text{opt} \eqdef \max_{k \in \{1,2,\ldots,K\}} \,
\mathcal{F}_k(e^{j \, \overline{\theta}_{k,1}}, e^{j \, \overline{\theta}_{k,2}}, \ldots, e^{j \, \overline{\theta}_{k,Q}})$.
Optimization \eqref{eq:max-prob-ckm} is known to be a hard non-convex
NP-hard problem. By exploiting its structure,
problem \eqref{eq:max-prob-ckm} can be transformed into an integer linear
program \cite{Wu.2020}, for which the globally optimal solution can be
obtained by applying the branch-and-bound method. 
Some methods have been developed in \cite{Wu.2020, Zhao.2021,Liang.2022} 
to reduce the computational complexity for the optimal solution, which are based on  
the  block-coordinate descent (or Gauss-Seidel) method \cite[Sec.~2.7]{Bert.1999}. 
Herein, we leverage the successive refinement algorithm proposed 
in \cite[III-B]{Wu.2020}, which is shown to achieve close-to-optimal performance
(see, e.g., the  numerical results reported in \cite{Wu.2020}).

For optimization purposes, it is convenient to extrapolate from $\mathcal{F}_k(e^{j \, \theta_1}, 
e^{j \, \theta_2}, \ldots, e^{j \, \theta_Q})$
the contribution of the single reflection phase $\theta_\nu$, for 
$\nu \in \{1,2, \ldots, Q\}$.  To this aim, after some algebraic rearrangements, it follows
from \eqref{eq:ckmsquare} that
\begin{multline}
\mathcal{F}_k(e^{j \, \theta_1}, \ldots, e^{j \, \theta_{\nu-1}}, 
e^{j \, \theta_\nu}, e^{j \, \theta_{\nu+1}}  
  \ldots, e^{j \, \theta_Q}) 
  \\ = \lambda_{k,\nu} + 2\, |\chi_{k,\nu}| \, 
  \cos(\theta_\nu+\measuredangle \chi_{k,\nu})
\label{eq:cos}
\end{multline}
where both complex quantities
\barr
\lambda_{k,\nu} & =
|h_{k}|^2 + 2 \, \Re\left\{\sum_{
\shortstack{\footnotesize $q=1$ 
\\ \footnotesize $q \neq \nu$}}^Q \beta_{k,q} \, e^{j \, \theta_q}\right\}
\nonumber \\ & \qquad 
+ \sum_{\shortstack{\footnotesize $q_1=1$ 
\\ \footnotesize $q_1 \neq \nu$}}^Q \sum_{\shortstack{\footnotesize $q_2=1$ 
\\ \footnotesize $q_2 \neq \nu$}}^Q \{\Bb_k\}_{q_1,q_2} \, 
e^{j \, \theta_{q_1}} \, e^{-j \, \theta_{q_2}}
\\
\chi_{k,\nu} & = \beta_{k,\nu} +
\sum_{\shortstack{\footnotesize $q=1$ 
\\ \footnotesize $q \neq \nu$}}^Q  \{\Bb_k\}_{\nu,q} \, 
e^{-j \, \theta_{q}}
\earr
do {\em not} depend on the reflection phase $\theta_\nu$.
Preliminarily, let us consider the problem of optimizing  
with respect to the variable $\theta_\nu \in \Theta$ while
holding all other variables constant, i.e., 
\be
\overline{\theta}_{k,\nu}=
\arg \max_{\theta_\nu \in \Theta}
  \, \mathcal{F}_k(e^{j \, \theta_1}, \ldots, e^{j \, \theta_{\nu-1}}, 
e^{j \, \theta_\nu}, e^{j \, \theta_{\nu+1}}  
  \ldots, e^{j \, \theta_Q})
\label{eq:max-prob-ckm-one}
\ee
for $k \in \{1,2,\ldots, K\}$ and 
$\nu \in \{1,2, \ldots, Q\}$. Accounting for \eqref{eq:cos},
problem \eqref{eq:max-prob-ckm-one} boils down to
\be
\overline{\theta}_{k,\nu}=
\arg \max_{\theta_\nu \in \Theta} \, 
\cos(\theta_\nu+\measuredangle \chi_{k,\nu}) \:.
\label{eq:max-prob-ckm-one-search}
\ee
Since \eqref{eq:max-prob-ckm-one-search} has at least one solution, 
problem \eqref{eq:max-prob-ckm} can be solved 
by optimizing with respect to the first variable $\theta_1$
while holding all
other variables constant, then optimizing with respect to the second variable $\theta_2$,
while holding all
other variables constant, and so on. This is referred
to as the block-coordinate ascent algorithm, whose convergence can be
shown under relatively general conditions \cite{Wu.2020}.
Specifically, given the current iterate
$\overline{\thetab}_k(i) \eqdef [\overline{\theta}_{k,1}(i), \overline{\theta}_{k,2}(i), 
\ldots, \overline{\theta}_{k,Q}(i)]^\trasp \in \Rset^{Q} $, 
the block-coordinate ascent algorithm generates the next iterate
$\overline{\thetab}_k(i+1)$ according to the iteration
shown in \eqref{eq:max-prob-ckm-one-iter},
\begin{figure*}[!t]
\normalsize
\barr
\overline{\theta}_{k,\nu}(i+1)  & =
\arg \max_{\theta_\nu \in \Theta}
  \, \mathcal{F}_k(e^{j \, \overline{\theta}_{k,1}(i+1)}, \ldots, 
  e^{j \, \overline{\theta}_{k,\nu-1}(i+1)}, 
e^{j \, \theta_\nu}, e^{j \, \overline{\theta}_{k,\nu+1}(i)}  
  \ldots, e^{j \, \overline{\theta}_{k,Q}(i)}
  \nonumber \\ & =
\arg \max_{\theta_\nu \in \Theta} \, 
\cos[\theta_\nu+\measuredangle \chi_{k,\nu}(i+1)]
\label{eq:max-prob-ckm-one-iter}
\earr
\hrulefill
\end{figure*}
with 
\begin{align}
\chi_{k,\nu}(i+1)  = \beta_{k,\nu} +
\sum_{q=1}^{\nu-1}  \{\Bb_k\}_{\nu,q} \, 
e^{-j \, \overline{\theta}_{k,q}(i+1)} 
\\ +
\sum_{q=\nu+1}^{Q}  \{\Bb_k\}_{\nu,q} \, 
e^{-j \, \overline{\theta}_{k,q}(i)} \: .
\label{eq:chiknu}
\end{align}

The optimization procedure of the reflection
response of the slowly time-varying RIS 
is summarized as Algorithm~\ref{table:tab_1}.
It is noteworthy that such an optimization algorithm 
explicitly accounts for the discrete nature of the reflection
coefficients of the RIS. Moreover,  
it involves only simple operations and does
not require any step size selection. 

\begin{algorithm}[t]
\caption{The proposed optimization of the reflection response of the RIS.}
\label{table:tab_1}
\begin{algorithmic}
\STATE  {\bf Input quantities}: $\betab_k $ and $\Bb_k$, $\forall k \in \{1,2,\ldots, K\}$. 

\STATE
{\bf Output quantities}: estimates of the solutions $\overline{\theta}_{k,1}, \overline{\theta}_{k,2}, \ldots, \overline{\theta}_{k,Q}$ of \eqref{eq:max-prob-ckm},  $\forall k \in \{1,2,\ldots, K\}$.

\STATE
{\bf Initialization}: 
$\overline{\theta}_{k,2}(0)=\overline{\theta}_{k,3}(0)=\cdots=
\overline{\theta}_{k,Q}(0)=0$, $\forall k \in \{1,2,\ldots, K\}$.

\STATE
{\bf Stopping criterion}:
identify a suitable maximum number of iteration $I_\text{max}$.
 
\begin{enumerate}[1.]

\itemsep=1mm

\item
Set $k=1$.

\item
Set $i=0$.

\item
Set $\nu=1$.

\item
Evaluate $\chi_{k,\nu}(i+1)$ using \eqref{eq:chiknu}.

\item
Compute $\overline{\theta}_{k,\nu}(i+1)$ using 
\eqref{eq:max-prob-ckm-one-iter} by an exhaustive searching algorithm.

\item
Set $\nu=\nu+1$: if $\nu < Q$, then go back to Step 4.

\item
If  $i < I_\text{max}$, then set $i=i+1$ and
go back to Step 3.

\item
Set $k=k+1$: if $k < K$, then go back to Step 2, else end the procedure.

\end{enumerate}
\end{algorithmic}
\end{algorithm}

As a system performance metric, by applying  channel coding across
channel coherence  intervals (i.e., over an ``ergodic" 
interval of channel variation with time), the ensemble average of the 
sum-rate time-averaged capacity \eqref{eq:C-sum} is given by
\be
\Es[\capaa_\text{sum}] = \xi_\text{full} 
\int_{0}^{+\infty} 
\log_2\left(1+ \pot_\text{TX} \, \alpha \right) \, f_{\alpha_\text{opt}}(\alpha) 
\, {\rm d} \alpha 
\label{eq:Ave-C-sum}
\ee
where $f_{\alpha_\text{opt}}(\alpha)$ is 
the probability density function (pdf) of the non-negative random variable
$\alpha_\text{opt}$.
Analytical evaluation of \eqref{eq:Ave-C-sum} is complicated 
by the fact that there is no closed-form solution of the optimization
problem \eqref{eq:max-prob-ckm}. Therefore, the dependence of 
$\Es[\capaa_\text{sum}]$ on the main system parameters (e.g., $K$ and $Q$)
will be numerically studied in Section~\ref{sec:simulations}.

\section{Rapidly time-varying RIS: Partial CSIT}
\label{sec:partial-CSI}

Achievement of the sum-rate time-averaged capacity  \eqref{eq:C-sum} 
requires optimization of the slowly time-varying RIS reflection response
at a rate $1/T_\text{c}$ approximately equal to the Doppler spread of the
underlying channels, which in turn 
involves acquisition of full CSIT.
In this section, 
we propose a scheme that constructs 
random reflections at the RIS in both space and 
time domains in order to 
take advantage of the presence
of the rapidly time-varying RIS in the system and to perform multiuser 
diversity scheduling by assuming partial CSIT (see Subsection~\ref{sec:channels}).
It should be observed that 
the performance of randomized slowly time-varying RIS
has been evaluated in \cite{Tegos.2022}. However, since  
the reflection  coefficients of the RIS are randomly varied at a rate 
$1/T_\text{c}$,  the beneficial effects of randomized rapidly time-varying reflections 
in terms of multiuser diversity and network fairness have not been 
studied.

Herein, we choose the reflection 
coefficients $\gamma_q^{(m)}=e^{j \, \Theta_q^{(m)}}$ in \eqref{eq:gamma},
for $q \in \{1,2, \ldots, Q\}$ and $m \in \{0,1,\ldots, M-1\}$, 
as a sequence of i.i.d. random variables with respect to 
both the space index $q$ and the time index $m$, where
each random variable $\Theta_q^{(m)}$ assumes 
equiprobable values in $\Theta$ (defined in 
Lemma~\ref{lemma:1}).
In this case, besides constructing the
spatially varying phase profile, 
the phase responses of the meta-atoms are randomly varied 
in time from one slot to another. 
After each coherence interval, we independently
choose another sequence of reflection 
coefficients, and so forth. 
In this scenario, Theorem~\ref{thm:1} still holds 
and the optimal scheduling strategy consists of 
transmitting in each slot to the user with 
the strongest channel. 
Since only partial CSIT is assumed, 
we allocate equal powers at all
slots, i.e., we set $\pot^{(m)}=\pot_\text{TX}$ in \eqref{eq:constr-pot-2},
$\forall m \in \{0,1,\ldots, M-1\}$.  
This is particularly relevant for the downlink, where the
base station can operate always at its peak total power.
Such a strategy 
is almost optimal  at high SNR \cite{Tse-book}, i.e., 
for sufficiently large values of $\pot_\text{TX}$.
Therefore, the corresponding (net) sum-rate time-averaged capacity is 
\be
\capaa_\text{rand} =
\frac{\xi_\text{par}}{M} \sum_{m=0}^{M-1}
\log_2\left[1+ \pot_\text{TX} \, \alpha_\text{max}(\gammab^{(m)}) \right] 
\label{eq:Ave-C-sum-rand}
\ee
with $\alpha_\text{max}(\gammab^{(m)})$ defined in 
\eqref{eq:alphamax}, where the reflection vector 
$\gammab^{(m)} = [e^{j \, \Theta_1^{(m)}},
e^{j \, \Theta_2^{(m)}}, \ldots, e^{j \, \Theta_Q^{(m)}}]^\herm$
is chosen according to the 
proposed rapidly time-varying randomized scheme,
whereas 
\be
\xi_\text{par} \eqdef 
1-\frac{M \, N_\text{par}}{L_\text{c}}=1-\frac{ N_\text{par}}{P}
\label{eq:over-par}
\ee
accounts for the overhead required to acquire CSI  (see Fig.~\ref{fig:fig_3}).
As expected, it results that  $\capaa_\text{rand} \le \capaa_\text{sum}$.
It is noteworthy that, by varying the phases of the meta-atoms in both 
space and time from one slot to another, 
the RIS randomly scatters multiple reflected beams, 
whose number, amplitudes, and directions change
on a slot-by-slot basis.
Therefore, with partial CSI only, 
the transmitter can schedule to the user currently 
closest to one of such beams that ensures maximum signal power.  
With many users, there
is likely to be a user very close to a ``strong" reflected beam at any slot.
Such an intuition is formally justified in the next subsection.

\subsection{Asymptotic performance analysis}
\label{sec:dis-rand}

Herein, we evaluate the ensemble average  
$\Es[\capaa_\text{rand}]$, which is defined as
the expected value of \eqref{eq:Ave-C-sum-rand}, over the
sample space of the $4$-tuple $\{\bh, \gb, \Fb, \Gammab\}$,
with $\bh \eqdef [h_1, h_2, \ldots, h_K]^\trasp \in \Cset^K$, 
$\Fb \eqdef [\fb_1,  \fb_2, \ldots,  \fb_K] \in \Cset^{Q \times K}$,
and $\Gammab$ has been defined at the beginning of 
Subsection~\ref{sec:step-1}. 
By virtue of the conditional expectation rule \cite{Proakis} and according to the 
proposed random reflection scheme, one has
\eqref{eq:Capa-ave-rand}, 
\begin{figure*}[!t]
\normalsize
\barr
\Es[\capaa_\text{rand}] & = \Es_{\bh, \gb, \Fb} \left\{ 
\Es_{\Gammab} \left[ \capaa_\text{rand} \, | \, \bh, \gb, \Fb \right] \right\}
 =
\frac{\xi_\text{par}}{M} \sum_{m=0}^{M-1}
\Es_{\bh, \gb, \Fb} \left\{ 
\Es_{\Gammab} \left[ 
\log_2\left(1+ \pot_\text{TX} \, \alpha_\text{max}(\gammab^{(m)}) \right) \, | \, \bh, \gb, \Fb 
\right] \right\}  
\nonumber \\ & =
\xi_\text{par} \, \Es_{\bh, \gb, \Fb} \left\{ 
\Es_{\Gammab} \left[ 
\log_2\left(1+ \pot_\text{TX} \, \alpha_\text{max}(\gammab^{(m)}) \right) \, | \, \bh, \gb, \Fb 
\right] \right\}
\nonumber \\ & =
\xi_\text{par} \, \left(\frac{1}{2^b}\right)^Q 
\sum_{\ell_1=0}^{2^b-1} \sum_{\ell_2=0}^{2^b-1} \cdots \sum_{\ell_Q=0}^{2^b-1}
\Es_{\bh, \gb, \Fb} \left\{\log_2\left[1+ \pot_\text{TX} \, \alpha_\text{rand}\right] \right\}
\label{eq:Capa-ave-rand}
\earr
\hrulefill
\end{figure*}
where 
\be
\alpha_\text{rand} \eqdef \max_{k \in \{1,2,\ldots,K\}} \,
\left|h_{k} + \gb^\herm \, \Thetab^* 
\, \fb_{k}\right|^2
\label{eq:alpha_rand_rd-los}
\ee
with
\be
\Thetab \eqdef \diag\left[e^{j \, \frac{2 \pi}{2^b} \ell_1},
e^{j \, \frac{2 \pi}{2^b} \ell_2}, \ldots, e^{j \, \frac{2 \pi}{2^b} \ell_Q}\right] \:.
\label{eq:Theta-TV}
\ee

For the sake of analysis, 
we assume that the channel between the transmitter and the RIS is characterized
by a dominant LoS component, i.e., $\gb =\gb_\text{LOS}$.
The LoS assumption is reasonable if the distance between the BS and the RIS
is small compared to their height above ground, 
without any sort of an obstacle between them.
The case in which the channels between the transmitter and the meta-atoms of the RIS
have a diffuse component will be numerically studied in Section~\ref{sec:simulations}.
For simplicity, we also consider the case 
in which the users approximately experience the same large-scale 
geometric path loss, i.e., the parameters $\sigma_{h_k}^2$ and $\sigma_{f_k}^2$
do not depend on $k$, i.e., $\sigma_{h_k}^2 \equiv \sigma_{h}^2$ and
$\sigma_{f_k}^2 \equiv \sigma_{f}^2$, $\forall k \in \{1,2,\ldots, K\}$,
which will be referred to as the case of {\em homogeneous} users.
From
a physical viewpoint, this happens when the users form a 
{\em cluster}, wherein the distances between the different UEs are negligible
with respect to the distance between the transmitter and the RIS.
A numerical performance analysis when the users  
have {\em heterogeneous} path losses will be provided in Section~\ref{sec:simulations}.
Under the above assumptions, it is readily seen that the random variable 
$\alpha_\text{rand}$ turns out to be the maximum of $K$ i.i.d. exponential 
random variables $Y_k$ with mean $\sigma_h^2 + \sigma_f^2 \, \sigma_g^2 \, Q$
and, thus, 
the pdf of $\alpha_\text{rand}$ is given by 
\begin{multline}
f_{\alpha_\text{rand}}(\alpha)  = 
K \,  f_{Y_k}(\alpha)  
\, \left[F_{Y_k}(\alpha)\right]^{K-1} 
\\ = 
\frac{K}{\sigma_h^2 + \sigma_f^2 \, \sigma_g^2 \, Q} 
\, e^{-\frac{\alpha}{\sigma_h^2 + \sigma_f^2 \, \sigma_g^2 \, Q}} \, 
\\ \cdot  
 \left[1- e^{-\frac{\alpha}{\sigma_h^2 + \sigma_f^2 \, \sigma_g^2 \, Q}}\right]^{K-1} \:,
 \quad \text{for $\alpha \ge 0$} 
\label{eq:frand}
\end{multline}
where $f_{Y_k}(x)$ and 
$F_{Y_k}(x)$ denote the pdf and  
the cumulative distribution function (cdf) of the random variable $Y_k$.
Since the distribution of the random variables $Y_1, Y_2, \ldots, Y_K$ does not 
depend on the indexes $\ell_1, \ell_2, \ldots, \ell_Q$, the average sum-rate capacity
\eqref{eq:Capa-ave-rand} simplifies to
\be
\Es[\capaa_\text{rand}]= \xi_\text{par}
\int_{0}^{+\infty} 
\log_2\left(1+ \pot_\text{TX} \, \alpha \right) \, f_{\alpha_\text{rand}}(\alpha) 
\, {\rm d} \alpha \:.
\label{eq:Ave-C-sum-rand}
\ee 

Trying to work with \eqref{eq:frand} for evaluating \eqref{eq:Ave-C-sum-rand} is difficult even for
small values of $K$.
For such a reason, we apply extreme value theory \cite{Lea.1978}
to calculate the distribution of $\alpha_\text{rand}$
when $K$ is sufficiently large (i.e., in practice greater than $10$).
It can be shown that, as $K \to + \infty$, the random 
variable $\alpha_\text{rand}$ convergences in distribution 
to the Gumbel function \cite{Gumbel.1958}, i.e., 
\be
\lim_{K \to \infty} F_{\alpha_\text{rand}}(\alpha) =
\text{exp}\left[ - e^{- \frac{\alpha-b_K}{a_K}} \right]
\label{eq:limitdis}
\ee
where $F_{\alpha_\text{rand}}(\alpha)$
is the cdf of  $\alpha_\text{rand}$, and the normalizing
constants $b_K$ and $a_K$ are given by 
\barr
b_K & \eqdef F_{Y_k}^{-1}\left(1-\frac{1}{K}\right) =
(\sigma_h^2 + \sigma_f^2 \, \sigma_g^2 \, Q) \, \ln K
\\
a_k & \eqdef  
\frac{1}{K \, f_{Y_k}(b_K)} = \sigma_h^2 + \sigma_f^2 \, \sigma_g^2 \, Q \: .
\earr
The proof of this result relies on the limit laws for maxima \cite{Lea.1978}
and the fact that the cdf of the exponential random variable $Y_k$ is 
a von Mises function \cite{Kaly.2012}, i.e., 
\be 
\lim_{\alpha \to + \infty} 
\left[ \frac{1-F_{Y_k}(\alpha)}{f_{Y_k}^2(\alpha)}\right]
\, \frac{{\rm d}}{{\rm d}\alpha} f_{Y_k}(\alpha)=-1 \: .
\ee
On the basis of \eqref{eq:limitdis}, the average sum-rate capacity \eqref{eq:Ave-C-sum-rand}
can be explicated as
\begin{multline}
\Es[\capaa_\text{rand}] \asymp \frac{\xi_\text{par}}{a_K}
\int_{0}^{+\infty} 
\log_2\left(1+ \pot_\text{TX} \, \alpha \right) 
\\ \cdot 
e^{- \frac{\alpha-b_K}{a_K}} \, \text{exp}\left[ - e^{- \frac{\alpha-b_K}{a_K}} \right]
\, {\rm d} \alpha 
\label{eq:Ave-C-sum-asympt}
\end{multline}
with $x \asymp y$ indicating that $\lim_{K \to + \infty} x/y =1$.
By using the Maclaurin series of the exponential function,  
the integral \eqref{eq:Ave-C-sum-asympt} can be 
rewritten as an absolutely convergent series \cite[Appendix~C]{Kaly.2012}, which can be
approximately evaluated by using a finite number of terms. 

Let  $\rho_\text{rand} \eqdef \pot_\text{TX} \, \alpha_\text{rand}$ be
the receive SNR, whose
average value can be calculated for large $K$ by using the Gumbel distribution
\eqref{eq:limitdis}, thus obtaining
\barr
\Es[\rho_\text{rand}] & = b_K + C \, a_K
\nonumber \\ & =
\pot_\text{TX} \left[ (\sigma_h^2 + \sigma_f^2 \, \sigma_g^2 \, Q) \, \ln K
+ C \, 
\left(\sigma_h^2 + \sigma_f^2 \, \sigma_g^2 \, Q\right) \right] 
\label{eq:ASNR-rand}
\earr
with $C \approx 0.5772$ being the Euler-Mascheroni constant.
For sufficiently large values of $K$,  
the mean overall channel 
logaritmically scales with the number of user $K$: 
this implies that $\Es[\capaa_\text{rand}] $ increases double logarithmically in $K$, 
and, thus, 
asymptotically, our randomized scheme does {\em not} suffer of 
a loss in multiuser diversity \cite{Tse-book}.
However, there is only a {\em linear} growth of $\Es[\rho_\text{rand}]$ with the
number $Q$ of meta-atoms: this implies that $\Es[\capaa_\text{rand}] $ increases  
logarithmically in $Q$.

\section{Fair Scheduling algorithms}
\label{sec:PFS}

The scheme discussed in Section~\ref{sec:sum-rate} focuses
on maximizing the sum rate time-averaged capacity, i.e., finding 
the solution of the optimization problem \eqref{eq:max-prob}. 
However, fairness among users is an important practical issue
when the users experience different fading effects.
A simple yet informative index to quantify 
the fairness of a scheduling scheme was proposed in \cite{Jain.1984},
which can be expressed as follows
\be
\fair \eqdef \frac{\displaystyle \left(\sum_{k=1}^K \overline{\capa}_k\right)^2}
{\displaystyle K \sum_{k=1}^K \overline{\capa}_k^2} \: .
\label{eq:fair-index}
\ee
If all the users get the same time-averaged rate, i.e., the rates
$\overline{\capa}_k$ are all equal, then the fairness index is $\fair=1$
and the scheduler is said to be $100\%$ fair. On the other hand,
a scheduling scheme that favors only a few selected users
has a fairness index $\fair \to 0$. 

The scheme developed in Section~\ref{sec:sum-rate},  
which exploits full CSIT and employs a slowly time-varying optimized RIS,
exhibits a fairness index $\fair_\text{sum}=1/K$ over the time-scale $T_\text{c}$, 
since  it supports only one user at each channel coherence time.
As expected, in this case, the scheduler becomes 
increasingly unfair as the number $K$ of users increases.
On the other hand, the suboptimal scheme proposed in 
Section~\ref{sec:partial-CSI}, 
which relies on partial CSIT and makes use of 
a randomized rapidly time-varying RIS, might ensure a fairness index
$\fair_\text{rand}$ over the time-scale $T_\text{c}$ 
that is greater than or equal to $1/K$, i.e., 
$\fair_\text{rand} \ge \fair_\text{sum}$. 
This behavior will be numerically shown in Section~\ref{sec:simulations}
and, intuitively, it comes from the fact that more than one
user might be scheduled for  each channel coherence time interval, due to the time variability 
induced by the rapidly time-varying RIS. 
Motivated by such an intuition, our goal is herein to demonstrate that exploitation
of the temporal dimension of the rapidly time-varying RIS is useful for 
developing scheduling strategies that explicitly take into account fairness
over a time horizon equal to the channel coherence time.

\subsection{Proportional fair scheduling with rapidly time-varying  RIS}

We focus on proportional fair scheduling (PFS) originally proposed in the network scheduling context
by \cite{Kelly.1998} as an alternative for a max-min scheduler, since it 
offers an attractive trade-off between the
maximum time-averaged throughput and user fairness.\footnote{The usefulness 
of rapidly time-varying RIS can be more generally shown
by considering other network utility functions that 
achieve a desired
balance between network-wide overall performance and user
fairness \cite{Mo.2000}.} 
Specifically, we start from introducing the 
{\em exponentially weighted moving average} rate  of user $k$, which can be written as 
\be
\overline{\capa}_k^{(m)} = \left(1-\frac{1}{m+1}\right)  \, \overline{\capa}_k^{(m-1)}
+ \frac{1}{m+1}  \, \capa_k^{(m)}
\quad \text{(in bits/s/Hz)}
\label{eq:EWMA}
\ee
for $k \in \{1,2,\ldots, K\}$ and $m \in \{1, 2, \ldots, M-1\}$, whereas 
$\overline{\capa}_k^{(0)}=\capa_k^{(0)}$.
Eq.~\eqref{eq:EWMA} is recursive, since the current average rate $\overline{\capa}_k^{(m)}$ 
at the time slot $m$ is calculated using the average achievable data rate that has been
benefited from user $k$ till the time slot $m$. One might arrange $\overline{\capa}_k^{(m)}$
to show that it is the weighted average of all the preceding $m$ instantaneous per-user rates
$\capa_k^{(m)}, \capa_k^{(m-1)}, \ldots, \capa_k^{(0)}$, where 
the weight $w_k^{(m)}(i)$ of the instantaneous rate $\capa_k^{(m-i)}$ is given by
\be
w_k^{(m)}(i) = \frac{1}{m+1} \, \left(1-\frac{1}{m+1}\right)^i
\ee
for $i \in \{1, 2, \ldots, m\}$, with $m \in \{1, 2, \ldots, M-1\}$. Therefore, 
recursion \eqref{eq:EWMA} assigns lower weights to the older instantaneous rates
that are achievable over the time-scale $T_\text{c}$.
Such a temporal smoothing introduces a memory in the resource allocation procedure
over each channel coherence time. 
To achieve a desired balance between network-wide overall performance and user
fairness, we consider the following {\em per-time-slot} PFS optimization problem: 
\be
\max_{\pb^{(m)}, \gammab^{(m)}}  \,
 \sum_{k=1}^K \log_2 \left[\overline{\capa}_k^{(m)}\right] \:, 
\quad \text{s.t. $\text{C}_1$ and $\text{C}_2$}
\label{eq:max-prob-w}
\ee
for $m \in \{0, 1, \ldots, M-1\}$, 
where the constraints $\text{C}_1$ and $\text{C}_2$ have been 
reported in \eqref{eq:max-prob}. 
It is noteworthy that, since problem  \eqref{eq:max-prob-w} has to 
be solved on a slot-by-slot basis, 
its solution is inherently time varying over the channel coherence time interval, i.e., 
the optimal values of the power vector $\pb^{(m)}$ and 
the reflection vector $\gammab^{(m)}$ might vary from
one time slot to another. Hence, a slowly time-varying RIS (see Subsection~\ref{sec:TI})
could not be useful and exploitation of the temporal dimension of the RIS may be
necessary. 
Under the assumption that $\left | \overline{\capa}_k^{(m)}-\overline{\capa}_k^{(m-1)}\right|$
is sufficiently small, by invoking Taylor's theorem, one can resort to the
linear approximation of $\log_2 \left[\overline{\capa}_k^{(m)}\right]$ 
for $ \overline{\capa}_k^{(m)}$ near the point $ \overline{\capa}_k^{(m-1)}$, 
thus yielding
\barr
\log_2 \left[\overline{\capa}_k^{(m)}\right] & \approx \log_2 \left[\overline{\capa}_k^{(m-1)}\right]
\nonumber \\ & 
+ \frac{1}{\overline{\capa}_k^{(m-1)} \, \ln(2)}
\left[\overline{\capa}_k^{(m)}-\overline{\capa}_k^{(m-1)}\right]
\nonumber \\ & =\log_2 \left[\overline{\capa}_k^{(m-1)}\right] + 
\frac{1}{(m+1) \, \ln(2)} \left[\frac{\capa_k^{(m)}}{\overline{\capa}_k^{(m-1)}}\right]
\label{eq:logapprox}
\earr
where the equality comes from the application of \eqref{eq:EWMA}. Capitalizing on 
\eqref{eq:logapprox}, the optimization problem \eqref{eq:max-prob-w} can be
approximated as follows
\be
\max_{\pb^{(m)}, \gammab^{(m)}}  \,
\sum_{k=1}^K \frac{\capa_k^{(m)}}{\overline{\capa}_k^{(m-1)}} \:, 
\quad \text{s.t. $\text{C}_1$ and $\text{C}_2$}
\label{eq:max-prob-w-approx-1}
\ee
for $m \in \{0, 1, \ldots, M-1\}$. 
Strictly speaking, under scheduling \eqref{eq:max-prob-w-approx-1}, 
users compete in each time slot not directly based on their instantaneous 
per-user rates but based on such rates normalized by their respective average rates.
For a fixed reflection vector $\gammab^{(m)}$, 
following the same information-theoretic arguments invoked 
in Subsection~\ref{sec:step-1} (see, in particular, Theorem~\ref{thm:1}), maximization
of $\sum_{k=1}^K {\capa_k^{(m)}}/{\overline{\capa}_k^{(m-1)}}$
with respect to $\pb^{(m)}$, s.t. constraint $\text{C}_1$, leads to the
opportunistic time-sharing
power allocation scheme \eqref{eq:power-allo}, s.t. \eqref{eq:constr-pot-2}, with 
$k_\text{max}^{(m)}$ replaced by 
\be
k_\text{pfs}^{(m)} \eqdef \arg \max_{k \in \{1,2,\ldots, K\}} 
|c_k^{(m)}|^{\frac{2}{\overline{\capa}_k^{(m-1)}}}
\ee
where we have also used the high-SNR approximation 
\be
\capa_k^{(m)}/\overline{\capa}_k^{(m-1)} \approx 
\log_2 \left[ \left \{ \text{SINR}_k^{(m)} \right \}^{\frac{1}{\overline{\capa}_k^{(m-1)}}} \right]
\ee
with $\text{SINR}_k^{(m)}$ defined in \eqref{eq:SINR}.
The optimal values of $\pot_k^{(m)}$ in \eqref{eq:power-allo}
and $\gammab^{(m)}$, which is embedded in $|c_k^{(m)}|$ [see \eqref{eq:vet-ck}], 
depend on the available CSIT. 

\subsubsection{Full CSIT}
\label{sec:PFS-fullCSIT}
In this case, along the same lines of  Section~\ref{sec:sum-rate}, 
the power allocated at the user $k =k_\text{pfs}^{(m)} $ during the 
$m$-th time slot is given by \eqref{p-wf} with $\alpha_\text{max}(\gammab^{(m)})$
replaced by 
\be
\alpha_\text{pfs}(\gammab^{(m)})\eqdef \max_{k \in \{1,2,\ldots, K\}} 
|c_k^{(m)}|^{\frac{2}{\overline{\capa}_k^{(m-1)}}} \: .
\label{eq:alphamax-pfs}
\ee
In such a case, the values of the reflection coefficients of the rapidly time-varying RIS 
can be obtained by a straightforward modification of Algorithm~\ref{table:tab_1}
(details are omitted). 

\subsubsection{Partial CSIT}
\label{sec:PFS-partCSIT}
When there is knowledge of partial CSIT only, as in 
Section~\ref{sec:partial-CSI}, we allocate power uniformly over the time slots
by setting 
$\pot^{(m)}=\pot_\text{TX}$ in \eqref{eq:constr-pot-2},
$\forall m \in \{0,1,\ldots, M-1\}$.
Moreover, since the reflection coefficients of the rapidly time-varying RIS
cannot be optimized without full CSIT, we choose 
$\gammab^{(m)}$ according to the same 
randomized algorithm proposed in  
Section~\ref{sec:partial-CSI}. For each time slot, the users are
scheduled through \eqref{eq:alphamax-pfs}.

It is noteworthy that, by virtue of the proposed algorithms, a rapidly time-varying RIS allows to 
achieve fairness over the time-scale $T_\text{c}$. 
PFS can be also implemented with a slowly time-varying RIS.
However, in this case, since the reconfigurability rate of the RIS is equal to $1/T_\text{c}$, 
fairness is achieved on a time horizon that is greater than $T_\text{c}$
and, consequently, users experience longer delays 
between successive transmissions.

\begin{figure}[t]
\centering
\includegraphics[width=\columnwidth]{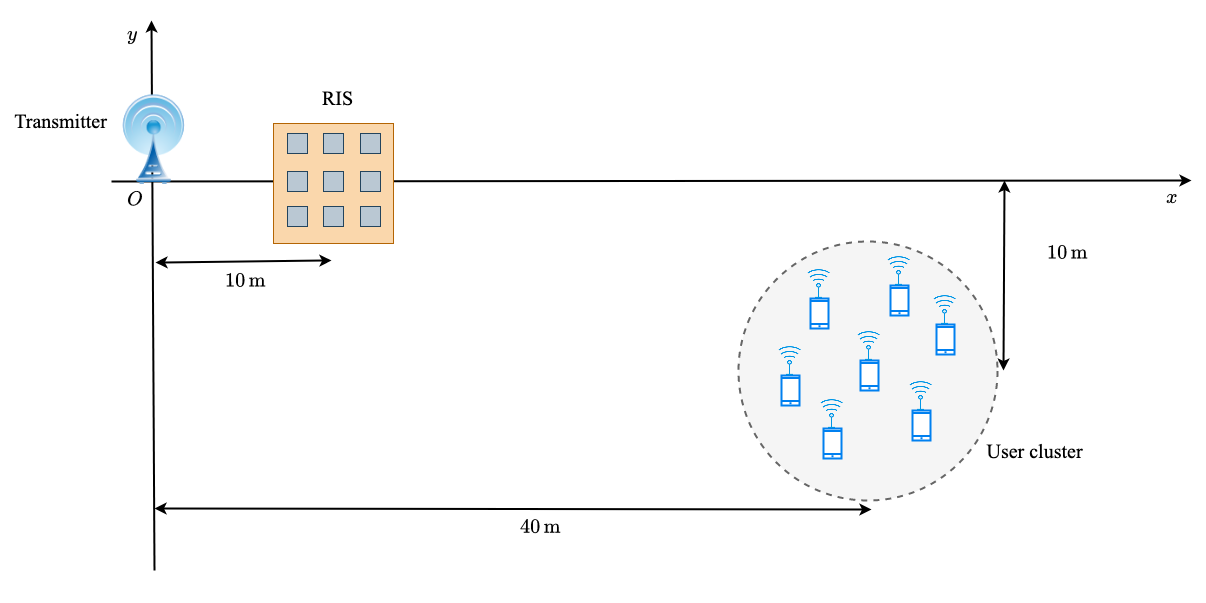}
\caption{Simulation setup of the considered downlink multiuser system.}
\label{fig:fig_simul}
\end{figure}

\section{Numerical results}
\label{sec:simulations}

In this section, with reference to 
the considered RIS-aided multiuser downlink, 
we report Monte Carlo numerical results aimed at 
validating the proposed designs and completing the developed 
performance analysis.  
With reference to Fig.~\ref{fig:fig_simul}, we consider a two-dimensional Cartesian system, wherein the 
transmitter and the center of the RIS are located at $(0,0)$ and $(10,0)$ (in meters), respectively, 
whereas the positions of the UEs are generated as random variables
uniformly distributed  within a circular cluster centered in $(40, -10)$ (in meters), whose 
radius is set to $10$ m in the case of homogeneous users and 
$100$ m in the case of heterogeneous users (see Subsection~\ref{sec:dis-rand}).
All the ensemble averages  
(with respect to the relevant random channel parameters,
spatial positions of the UEs, and, wherever applicable,  the randomized reflection coefficients
of the RIS) are evaluated through $200$ independent runs.
For the main system parameters, we refer to the exemplifying scenario reported in 
Table~\ref{tab:example-1}.
The remaining parameters of the simulation setting 
are reported in Table~\ref{tab:example-2}.
\begin{table*}
\scriptsize
\centering
\caption{Parameters of the simulation setup.}
\label{tab:example-2}
\begin{tabular}{ccc}
\hline
\noalign{\vskip\doublerulesep}
\textbf{Parameters} & \textbf{Symbols} & 
\textbf{Values} 
\tabularnewline[\doublerulesep]
\hline
Effective isotropic radiated power & EIRP & $33$ dBm
\tabularnewline
Noise power at the UEs & $\pot_\text{noise}$ & $-100$ dBm
\tabularnewline
Inter-element spacing at the RIS & $d_\text{RIS}$ & $\lambda_0/2$
\tabularnewline
Azimuth and elevation angles 
at the RIS &  $\theta_\text{RIS}$ and  $\phi_\text{RIS}$ & Uniformly distributed
\tabularnewline
Alphabet of the reflection coefficients of the RIS & $\mathcal{R}$ & $\{\pm 1, \pm j\}$
\tabularnewline
Ricean factor for the transmitter-to-RIS channel & $\kappa$ & $3$
\tabularnewline
Path-loss exponent & $\eta$ & 1.6
\tabularnewline
\hline
\end{tabular}
\end{table*}
The channel variances $\sigma_g^2$ (from the transmitter to the RIS),
$\sigma_{h_k}^2$ (from the transmitter to the $k$-th UE),
and $\sigma_{f_k}^2$ (from the RIS to the $k$-th UE)
are chosen as 
$\sigma^2_\alpha = G_\alpha \, d_\alpha^{-\eta} \, \lambda_0^2/(4 \,\pi)^2$, 
for $\alpha \in \{g, h_k, f_k \}$, where 
$G_\alpha=4 \pi A_\text{RIS}/\lambda_0^2$ for 
the RIS \cite{Ell.2021}, with  $A_\text{RIS}=Q \, (\lambda_0/2)^2$ denoting 
the physical area of the RIS,  and $G_\alpha=5$ dBi for the UEs, 
$d_\alpha$ is the distance of the considered link, and $\eta$ is given in Table~\ref{tab:example-2}.
Regarding the training overhead for CSI acquisition, we set 
$N_\text{up-full}=K \, (Q+1)$ and $N_\text{down-full}=1$
in the case of full CSIT (see Fig.~\ref{fig:fig_2}), 
whereas $N_\text{par}=2$ in the case of partial CSIT 
(see Fig.~\ref{fig:fig_3}).

In all the forthcoming examples, we compare the 
(statistical) average performance of the following RIS designs:
(i) the sum-rate time-averaged capacity \eqref{eq:C-sum}, 
referred to as ``STV opt", which 
has been derived by assuming full CSIT 
and by using Algorithm~\ref{table:tab_1} for 
the optimization of the reflection coefficients of
the slowly time-varying RIS over each channel coherence time interval;
(ii) the sum-rate time-averaged capacity \eqref{eq:Capa-ave-rand}, 
referred to as ``RTV rand", which 
has been derived by assuming partial CSIT 
and randomized slot-by-slot variations in both space and time of 
the reflection coefficients of the rapidly time-varying RIS;
(iii) the sum-rate time-averaged capacity of 
the solution derived in Subsection~\ref{sec:PFS-fullCSIT},
referred to as ``RTV opt w/ PFS", which 
has been derived by assuming full CSIT 
and PFS, where the reflection coefficients 
of the rapidly time-varying RIS are optimized for each
time slot by using Algorithm~\ref{table:tab_1};
(iv) the sum-rate time-averaged capacity 
of the solution  derived in Subsection~\ref{sec:PFS-partCSIT},
referred to as ``RTV rand w/ PFS", which 
has been derived by assuming partial CSIT 
and PFS, where the reflection coefficients 
of the rapidly time-varying RIS are randomized slot-by-slot 
in both space and time.
Furthermore, we plot the average performance of the
downlink in the absence of an RIS without
fairness constraints, referred to as 
``w/o RIS".\footnote{In the considered simulation setting, the performance of 
a downlink in the absence of an RIS with PFS is significantly worse than that 
of the scenario ``w/o RIS" and, thus, we decided not to report the corresponding curves here.
}
For all the scheduling algorithms, we also show the fairness index 
\eqref{eq:fair-index}. 

\begin{figure*}[!t]
\begin{minipage}[b]{9cm}
\centering
\includegraphics[width=\linewidth]{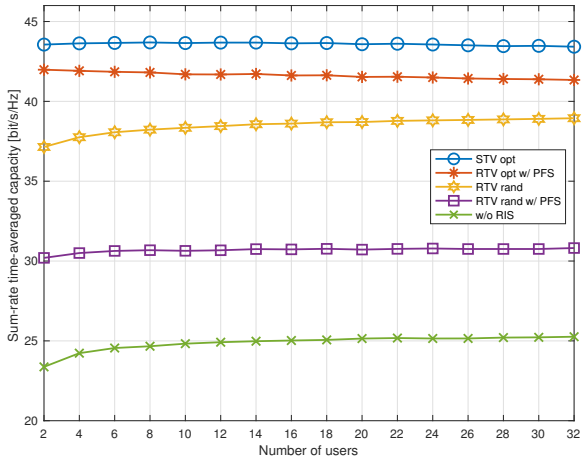}
\end{minipage}
\begin{minipage}[b]{9cm}
\centering
\includegraphics[width=\linewidth]{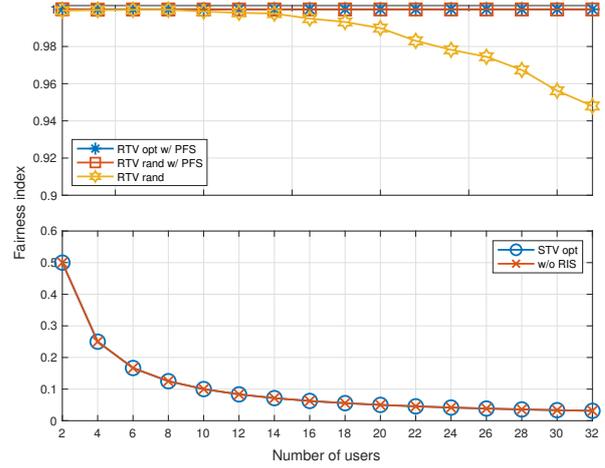}
\end{minipage}
\caption{Sum-rate time-averaged capacity (left) and fairness index (right) versus 
the number of users $K$ (Example~1).
}
\label{fig:fig_4}
\end{figure*}
\begin{figure*}[!t]
\begin{minipage}[b]{9cm}
\centering
\includegraphics[width=\linewidth]{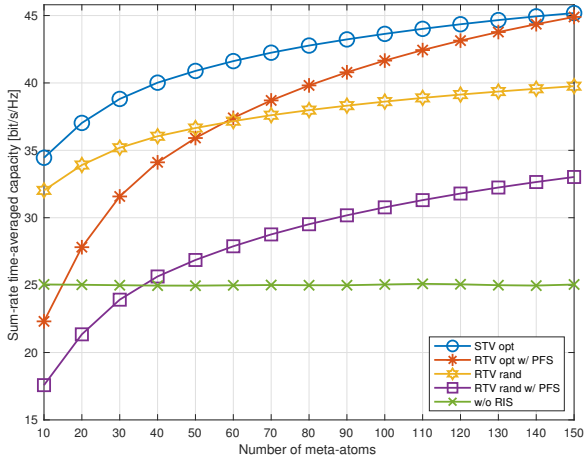}
\end{minipage}
\begin{minipage}[b]{9cm}
\centering
\includegraphics[width=\linewidth]{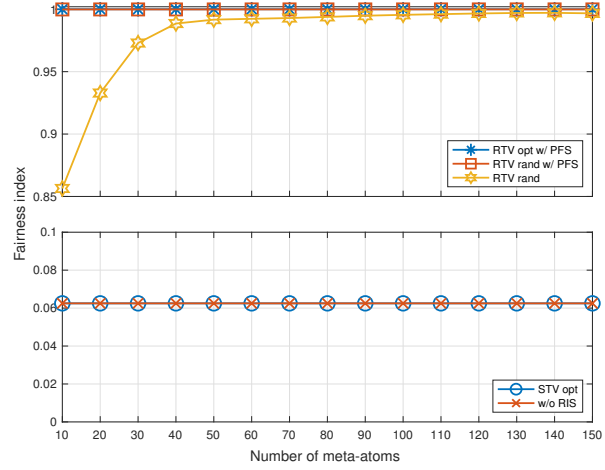}
\end{minipage}
\caption{Sum-rate time-averaged capacity (left) and fairness index (right) versus 
the number of meta-atoms $Q$ (Example~1).
}
\label{fig:fig_5}
\end{figure*}

\subsection{Example~1: Homogeneous users}

Fig.~\ref{fig:fig_4} shows the sum-rate time-averaged capacity 
(left-side plot) and fairness (right-side plot) performance
of the schemes under comparison in the case of 
homogeneous users as a function of the number 
of users $K$, with $Q=10 \times 10$ meta-atoms.
We observe that, except for the schemes requiring full CSIT,   
the sum-rate time-averaged 
capacity increases double logarithmically with the number
of users $K$ in the system: this is
the multiuser diversity effect \cite{Tse-book}.
The same effect is not visible for the  
``STV opt" and ``RTV opt w/ PFS" downlinks 
since the logarithmic growth of the 
mean overall channel with $K$ is contrasted 
by the pilot overhead required to acquire full CSIT, which
also depends on $K$.  
It is clear that the presence of an RIS leads to
significant sum-rate gains with respect the 
``w/o RIS" case. 

As expected, even thought the ``STV opt" downlink 
allows to achieve the best sum-rate capacity performance, it requires
full CSIT and, moreover, as in the scenario of ``w/o RIS", 
it becomes fairer and fairer
as the number of users increases. 
In contrast, from the curves of ``RTV rand", one can argue 
that, compared to  ``STV opt",  a much fairer system 
can be obtained by introducing a rapidly time-varying
randomization of the reflection response of the RIS, which also allows
to highlight an interesting trade-off between performance and 
required  amount of CSIT for downlink optimization. 
A $100 \%$ fair system can be designed by using PFS. Obviously,
such a fairness comes at the price of a sum-rate reduction:   indeed,
the sum-rate curves of the 
``RTV opt w/ PFS" and ``RTV rand w/ PFS"
systems are worse than those of their corresponding counterparts 
``STV opt" and ``RTV rand", respectively. 
Interestingly, although it involves partial CSIT only, 
the ``RTV rand" downlink pays
a mild performance gain with respect to 
``RTV opt w/ PFS" - which however requires full CSIT - 
by ensuring a fairness of nearly $100\%$ up to $18$ users 
and no smaller than $94 \%$ up to $32$ users.

Fig.~\ref{fig:fig_5} shows the sum-rate time-averaged capacity 
(left-side plot) and fairness (right-side plot) performance
of the schemes under comparison in the case of 
homogeneous users as a function of the number 
of meta-atoms $Q$, with $K=16$ users.
Two additional interesting trends can be inferred from such figures.
First, in accordance with our analysis in 
Subsection~\ref{sec:dis-rand}, 
the sum-rate time-averaged 
capacity of the schemes employing partial CSIT
increases logarithmically with the number
of meta-atoms $Q$ at the RIS, whereas the 
downlinks optimized with full CSIT exhibit 
a faster growth rate. 
Second, as the number of meta-atoms increases, 
the sum-rate capacity of ``RTV opt w/ PFS"
tends to that of ``STV opt" by ensuring at the same time 
$100 \%$ fairness, while $\fair =100/16=6.25 \%$ for the ``STV opt" scheme.

\begin{figure*}[!t]
\begin{minipage}[b]{9cm}
\centering
\includegraphics[width=\linewidth]{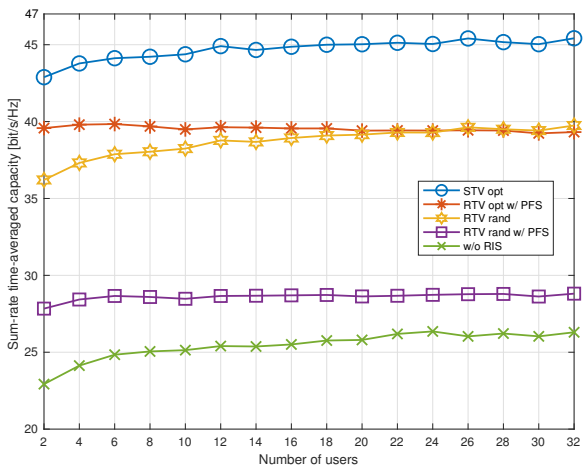}
\end{minipage}
\begin{minipage}[b]{9cm}
\centering
\includegraphics[width=\linewidth]{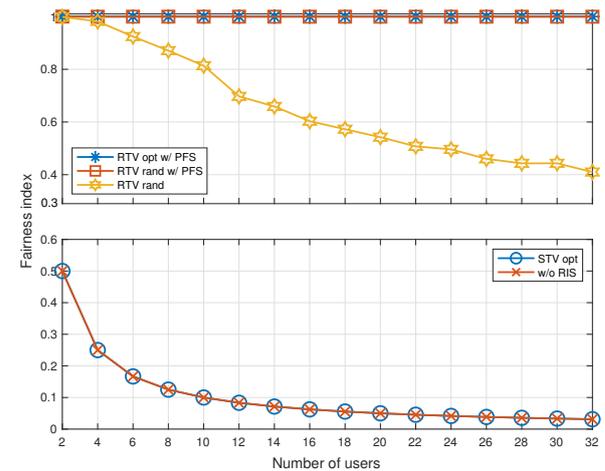}
\end{minipage}
\caption{Sum-rate time-averaged capacity (left) and fairness index (right) versus 
the number of users $K$ (Example~2).
}
\label{fig:fig_6}
\end{figure*}
\begin{figure*}[!t]
\begin{minipage}[b]{9cm}
\centering
\includegraphics[width=\linewidth]{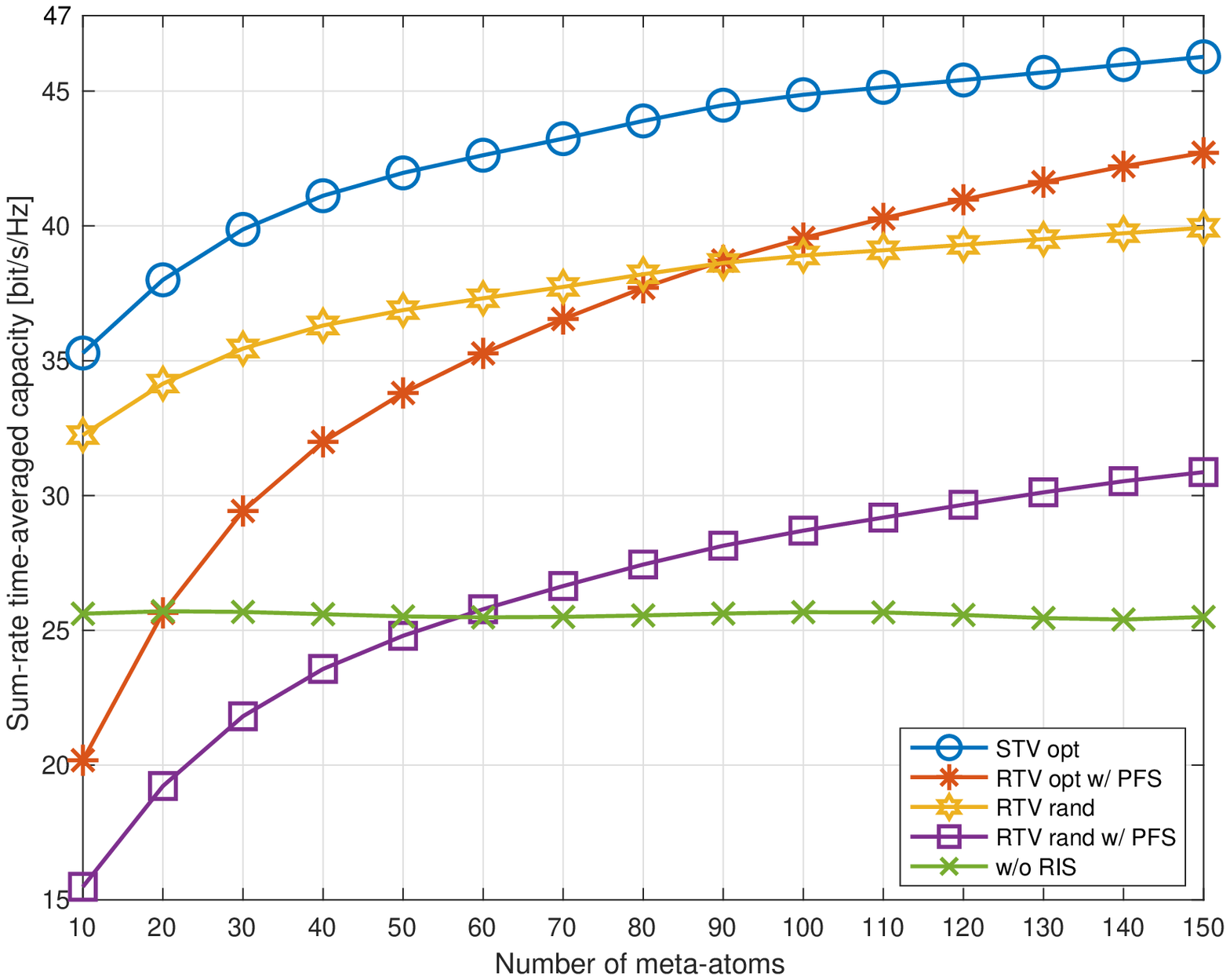}
\end{minipage}
\begin{minipage}[b]{9cm}
\centering
\includegraphics[width=\linewidth]{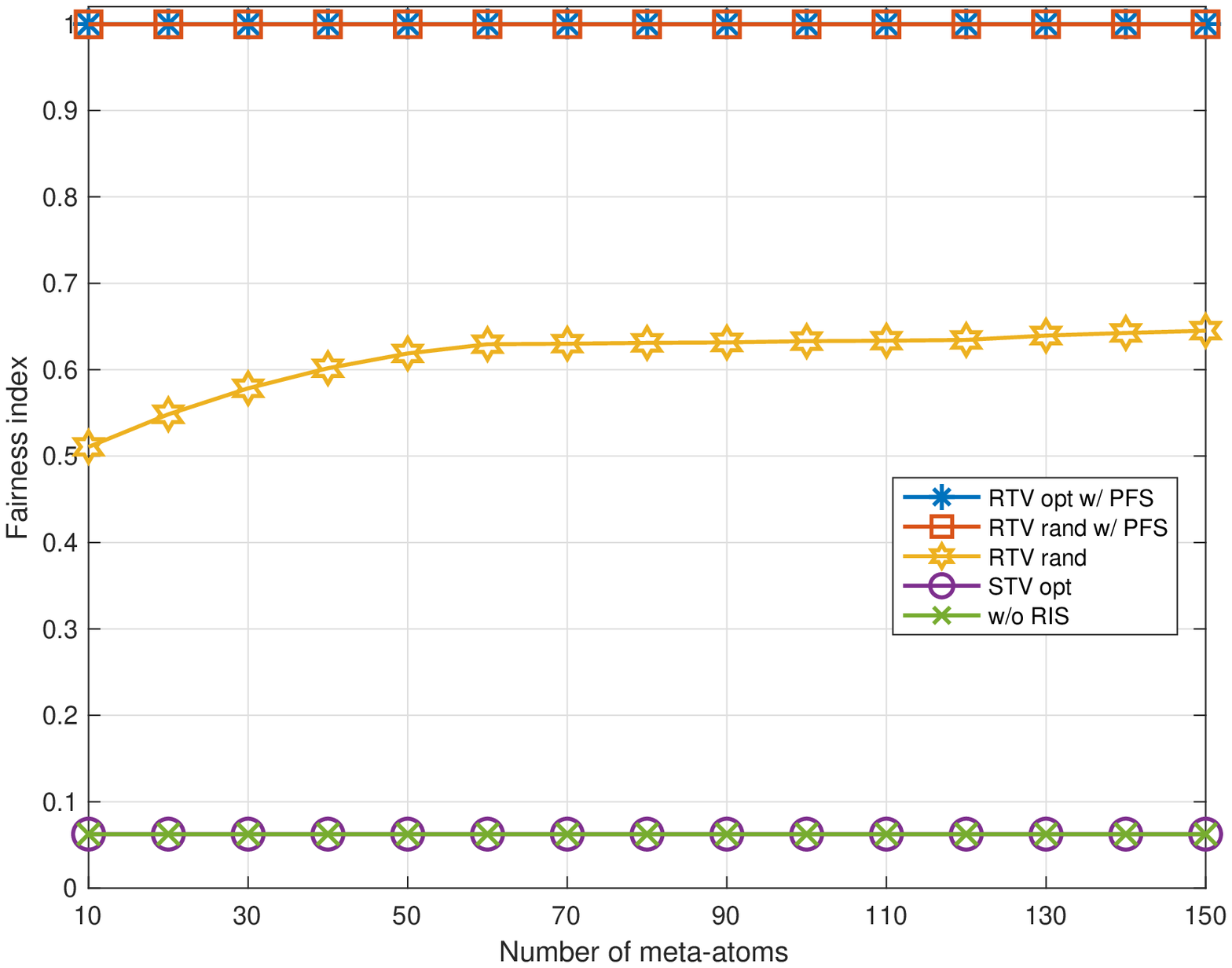}
\end{minipage}
\caption{Sum-rate time-averaged capacity (left) and fairness index (right) versus 
the number of meta-atoms $Q$ (Example~2).
}
\label{fig:fig_7}
\end{figure*}

\subsection{Example~2: Heterogeneous users}

Figs.~\ref{fig:fig_6} and \ref{fig:fig_7}
report the sum-rate time-averaged capacity 
(left-side plot) and fairness (right-side plot) performance
of the schemes under comparison in the case of 
heterogeneous users. In Fig.~\ref{fig:fig_6}, we set
$Q=10 \times 10$ meta-atoms, whereas  
we fix $K=16$ users in Fig.~\ref{fig:fig_7}.
Besides substantially confirming the above trends 
evidenced in Example~1, a behavior that is worth 
mentioning is the reduction in fairness of the  
``RTV rand" downlink, compared to the 
homogeneous case. Hence, we might argue that 
the randomization in both space and time of the
reflection coefficients of the RIS leads to a fair
resource allocation if there is no severe 
``near-far effect", i.e., users that experience 
significantly different fading effects.
It is apparent that, to cope with
such an heterogeneous situation,  the schemes employing PFS
are forced to meaningfully cut down their sum-rate performance 
in order to fulfill the fairness constraint.

\section{Conclusion and directions for future work}
\label{sec:conclusions}

In this article, we have investigated the impact of 
a rapidly time-varying RIS on the resource allocation in 
a downlink multiuser system operating over 
slow fading channels. 

The following results have been obtained:

\begin{enumerate}

\item 
Joint optimization of transmit powers and reflection coefficients
of the RIS in the case of full CSIT leads to a slowly time-varying RIS,
whose reflection response is updated with a rate that is 
(approximately) the inverse of the channel coherence time. 
In this case, only one user is scheduled within  a channel
coherence interval.

\item 
The assumption of full CSIT can be relaxed by randomizing 
the reflection coefficients of a rapidly time-varying RIS in both 
space and time domains, without paying a penalty 
in terms of multiuser diversity gain.  
The resulting resource allocation 
scheme requires partial CSIT only and 
ensures a certain fairness, since the randomization in time allows 
the transmitter to serve 
more than one user for each channel coherence time interval.

\item 
If the main design concern is to 
meet fairness among users while at the same time exploiting
the multiuser diversity gain, then a rapidly time-varying RIS 
has to be employed in the downlink.
In such a case, the temporal dimension of the RIS is 
beneficial not only in the partial CSIT scenario, but also
in the full CSIT one.
 
\end{enumerate}

This study demonstrates that, if acquisition of 
full CSIT is not possible in practice and/or the communication
system has stringent requirements in terms of fairness, the 
exploitation of the temporal dimension of a rapidly time-varying 
RIS is instrumental in designing the downlink of a multiuser systems
even in slow fading scenarios.
We have considered single-antennas transmitter and receivers.
In this respect, a first interesting research subject consists of extending
the proposed framework to the case of multi-antenna terminals.
In this case, the downlink is in general
a nondegraded broadcast channel and 
sum capacity is achieved by using more sophisticated 
multiple access strategies than opportunistic time sharing, i.e., 
dirty-paper coding.
An additional research direction involves investigating
the benefits of a rapidly time-varying RIS in fast fading environments.


\begin{IEEEbiography}[
{\includegraphics[width=1in,height=1.25in,clip,keepaspectratio]{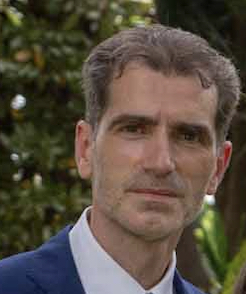}}]
{Francesco Verde}(Senior Member, IEEE)  received the Dr. Eng. degree
\textit{summa cum laude} in electronic engineering
from the Second University of Napoli, Italy, in 1998, and the Ph.D.
degree in information engineering
from the University of Napoli Federico II, in 2002.
Since December 2002, he has been with the University of Napoli Federico II, Italy. He first served as an Assistant Professor of signal theory and mobile communications
and, since December 2011, he has served as an Associate Professor of telecommunications with the Department of Electrical Engineering and Information Technology.
His research activities include reflected-power communications, 
orthogonal/non-orthogonal multiple-access techniques, wireless systems optimization, and 
physical-layer security.

Prof. Verde has been involved in several technical program committees of major IEEE conferences in signal processing and wireless communications.
He has served as Associate Editor for IEEE TRAN\-SACTIONS ON VEHICULAR TECHNOLOGY since 2022.
He was an Associate Editor of the IEEE TRANSACTIONS ON SIGNAL PROCESSING (from 2010 to 2014), IEEE SIGNAL PROCESSING LETTERS (from 2014 to 2018),
IEEE TRANSACTIONS ON COMMUNICATIONS (from 2017 to 2022), and 
Senior Area Editor of the IEEE SIGNAL PROCESSING LETTERS (from 2018 to 2023), 
as well as Guest Editor of the EURASIP Journal on Advances in Signal Processing in 2010 and SENSORS MDPI in 2018-2022.
\end{IEEEbiography}

\vspace*{-5\baselineskip}

\begin{IEEEbiography}[
{\includegraphics[width=1in,height=1.25in,clip,keepaspectratio]{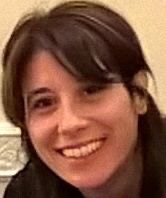}}]
{Donatella Darsena} (Senior Member, IEEE) received the Dr. Eng. degree summa cum laude in telecommunications engineering in 2001, and the Ph.D. degree in electronic and telecommunications engineering in 2005, both from the University of Napoli Federico II, Italy. From 2001 to 2002, she worked as embedded system designer in the Telecommunications, Peripherals and Automotive Group, STMicroelectronics, Milano, Italy. 
In 2005 she joined the Department of Engineering at Parthenope University of Napoli, Italy and worked first as an Assistant Professor and then as an Associate Professor from 2005 to 2022.
She is currently an Associate Professor in the Department of Electrical Engineering and Information Technology of the University of Napoli Federico II, Italy.
Her research interests are in the broad area of signal processing for communications, with current emphasis on reflected-power communications, orthogonal and nonorthogonal multiple access techniques, wireless system optimization, and physical-layer security.
Dr. Darsena has served as an Associate Editor for IEEE ACCESS since October 2018 and for IEEE SIGNAL PROCESSING LETTERS since 2020. 
She served first as an Associate Editor (from December 2016 to July 2019) and then as a Senior Area Editor (from August 2019 to July 2023) for IEEE COMMUNICATIONS LETTERS.
Since July 2023, she has been an Executive Editor of IEEE COMMUNICATIONS LETTERS.
\end{IEEEbiography}

\vspace*{-5\baselineskip}

\begin{IEEEbiography}[{\includegraphics[width=1in,height=1.25in,clip,keepaspectratio]{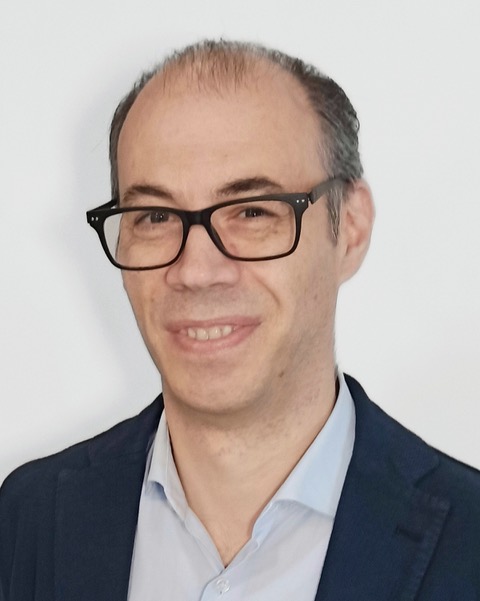}}]
	{Vincenzo Galdi} (Fellow, IEEE) 
	received the Laurea degree ({\em summa cum laude}) in electrical engineering and the Ph.D. degree in applied electromagnetics from the University of Salerno, Italy, in 1995 and 1999, respectively. 
	
	He has held several research-associate and visiting positions at abroad research institutions, including the European Space Research and Technology Centre, Noordwijk, The Netherlands; Boston University, Boston, MA, USA; the Massachusetts Institute of Technology, Cambridge, MA, USA; the California Institute of Technology, Pasadena, CA, USA; and The University of Texas at Austin, Austin, TX, USA. He is currently a Professor of electromagnetics with the Department of Engineering, University of Sannio, Benevento, Italy, where he leads the Fields \& Waves Laboratory. He is the Co-Founder of the spinoff company MANTID srl, Benevento, and the startup company BioTag srl, Naples. He has co-edited two books and coauthored about 180 articles in peer-reviewed international journals, and is the co-inventor of thirteen patents. His research interests encompass wave interactions with complex structures and media, multiphysics metamaterials, smart propagation environments, optical sensing, and gravitational interferometry. 
	
	Dr. Galdi is a Fellow of Optica (formerly OSA), a Senior Member of the LIGO Scientific Collaboration, and a member of the American Physical Society. He was a recipient of the Outstanding Associate Editor Award of \textsc{IEEE Transactions on Antennas and Propagation} in 2014 and the URSI Young Scientist Award in 2001. He has served as the Chair for the Technical Program Committee of the International Congress on Engineered Material Platforms for Novel Wave Phenomena in 2018, the Topical/Track Chair for the Technical Program Committee of the IEEE International Symposium on Antennas and Propagation and USNC-URSI Radio Science Meeting from 2016 to 2017 and from 2020 to 2023, and an organizer/chair for several topical workshops and special sessions. He has also served as a Track Editor from 2016 to 2020, a Senior Associate Editor from 2015 to 2016, and an Associate Editor from 2013 to 2014 of the \textsc{IEEE Transactions on Antennas and Propagation}. He is serving as an Associate Editor of {\em Optics Express} and {\em Heliyon} and a regular reviewer for many journals, conferences, and funding agencies.
	
\end{IEEEbiography}

\end{document}